\documentclass[letterpaper, 11pt]{article}
\bibliographystyle{plain}
\usepackage{amsmath,amssymb,amsthm,xspace,fullpage}
\usepackage{algorithm}
\usepackage[noend]{algorithmic}
\newtheorem{theorem}{Theorem}[section]
\newtheorem{lemma}[theorem]{Lemma}
\newtheorem{definition}[theorem]{Definition}
\newtheorem{proposition}[theorem]{Proposition}
\newtheorem{corollary}[theorem]{Corollary}

\newcommand{\E}{\mathop{\mathbf{E}}}
\newcommand{\Var}{\mathop{\mathbf{Var}}}
\newcommand{\bbA}{\mathbb{A}}
\newcommand{\bbB}{\mathbb{B}}
\newcommand{\bbF}{\mathbb{F}}
\newcommand{\bbH}{\mathbb{H}}
\newcommand{\bfa}{\mathbf{a}}

\newcommand{\biw}{\boldsymbol{w}}
\newcommand{\caA}{\mathcal{A}}
\newcommand{\caB}{\mathcal{B}}
\newcommand{\caG}{\mathcal{G}}
\newcommand{\caH}{\mathcal{H}}
\newcommand{\caI}{\mathcal{I}}
\newcommand{\caJ}{\mathcal{J}}

\newcommand{\caS}{\mathcal{S}}

\newcommand{\caV}{\mathcal{V}}
\newcommand{\dist}{\mathrm{dist}}
\newcommand{\maj}{\textrm{maj}}

\newcommand{\HornSAT}{\textsf{Horn-SAT}\xspace}
\newcommand{\kSAT}{\textsf{$k$-SAT}\xspace}
\newcommand{\twSAT}{\textsf{2SAT}\xspace}
\newcommand{\thSAT}{\textsf{3SAT}\xspace}
\newcommand{\CSP}[1]{\textsf{CSP}(#1)\xspace}
\newcommand{\HOM}[1]{\textsf{HOM}(#1)\xspace}
\newcommand{\LHOM}[1]{\textsf{LHOM}(#1)\xspace}
\newcommand{\Pol}[1]{\textsf{Pol}(#1)\xspace}
\newcommand{\Inv}{\mathsf{Inv}}
\newcommand{\Alg}[1]{\textsf{Alg}(#1)\xspace}
\newcommand{\Str}[1]{\textsf{Str}(#1)\xspace}
\newcommand{\Term}[1]{\textsf{Term}(#1)\xspace}
\newcommand{\ttmin}{\textsf{$(2,3)$-Minimality}\xspace}
\newcommand{\klmin}{\textsf{$(k,\ell)$-Minimality}\xspace}

\title{Testing List $H$-Homomorphisms}
\author{Yuichi Yoshida\thanks{Supported by MSRA Fellowship 2010.}\\\\
  School of Informatics, Kyoto University, and\\ Preferred Infrastructure, Inc.\\yyoshida@kuis.kyoto-u.ac.jp}
\begin{document}
\setcounter{page}{0}
\maketitle

\begin{abstract}
  Let $H$ be an undirected graph.
  In the \textsf{List $H$-Homomorphism Problem},
  given an undirected graph $G$ with a list constraint $L(v) \subseteq V(H)$ for each variable $v \in V(G)$,
  the objective is to find a list $H$-homomorphism $f:V(G) \to V(H)$,
  that is,
  $f(v) \in L(v)$ for every $v \in V(G)$ and $(f(u),f(v)) \in E(H)$ whenever $(u,v) \in E(G)$.

  We consider the following problem:
  given a map $f:V(G) \to V(H)$ as an oracle access,
  the objective is to decide with high probability whether $f$ is a list $H$-homomorphism or \textit{far} from any list $H$-homomorphisms.
  The efficiency of an algorithm is measured by the number of accesses to $f$.
  
  In this paper, we classify graphs $H$ with respect to the query complexity for testing list $H$-homomorphisms and show the following trichotomy holds:
  (i)  List $H$-homomorphisms are testable with a constant number of queries if and only if $H$ is a reflexive complete graph or an irreflexive complete bipartite graph.
  (ii) List $H$-homomorphisms are testable with a sublinear number of queries if and only if $H$ is a bi-arc graph.
  (iii) Testing list $H$-homomorphisms requires a linear number of queries if $H$ is not a bi-arc graph.
\end{abstract}

\newpage

\section{Introduction}
For two graphs $G = (V(G),E(G))$ and $H = (V(H),E(H))$,
a map $f: V(G) \to V(H)$ is called a \textit{homomorphism} from $G$ to $H$ if $(f(u),f(v)) \in E(H)$ whenever $(u,v) \in E(G)$.
In the \textsf{$H$-Homomorphism Problem} ($\HOM{H}$ for short),
given an undirected graph $G$,
the objective is to decide whether there exists a homomorphism from $G$ to $H$.
It is well known that $\HOM{H}$ is in \textbf{P} if $H$ is a bipartite graph and in \textbf{NP-Complete} if $H$ is not a bipartite graph~\cite{barto2010cyclic,bulatov2005h,hell1990complexity}.

\textsf{List $H$-Homomorphism Problem} ($\LHOM{H}$ for short) is a variant of \textsf{$H$-Homomorphism Problem},
in which we are also given a list $L(v) \subseteq V(H)$ for each vertex $v$ in $G$.
A map $f:V(G) \to V(H)$ is called a \textit{list-homomorphism} from $G$ to $H$ if $f$ is a homomorphism from $G$ to $H$ and $f(v) \in L(v)$ for every $v \in V(G)$.
The objective is to decide whether there exists a list-homomorphism $f$ from $G$ to $H$.
There are many results on the relationship between the graph $H$ and the computational complexity of $\LHOM{H}$~\cite{egri2010complexity,feder1998list,feder1999list,feder2003bi}.
In particular, $\LHOM{H}$ is in \textbf{P} iff $H$ is a bi-arc graph~\cite{feder2003bi}.

In this paper, 
we consider testing list-homomorphisms.
See~\cite{goldreich10introduction,ron2009algorithmic} for surveys on \textit{property testing}.
In our setting, a map $f$ is given as an oracle access, i.e., 
the oracle returns $f(v)$ if we specify a vertex $v \in V(G)$.
A map $f$ is called \textit{$\epsilon$-far} from list-homomorphisms if we must modify at least an $\epsilon$-fraction of $f$ to make $f$ a list-homomorphism.
An algorithm is called a \textit{tester} for $\LHOM{H}$ if it accepts with probability at least $2/3$ if $f$ is a list-homomorphism from $G$ to $H$ and rejects with probability at least $2/3$ if $f$ is $\epsilon$-far from list-homomorphisms.
The efficiency of an algorithm is measured by the number of accesses to the oracle $f$.
When we say that a query complexity is constant/sublinear/linear,
it always means constant/sublinear/linear in $|V(G)|$,
i.e., the domain size of $f$.
We can assume that there exists a list-homomorphism from $G$ to $H$.
If otherwise, we can reject immediately without any query.

In this paper, we completely classify graphs $H$ with respect to the query complexity for testing $\LHOM{H}$.
Our result consists of the following two theorems.
\begin{theorem}\label{thr:constant-list}
  $\LHOM{H}$ is testable with a constant number of queries iff $H$ is an irreflexive complete bipartite graph or a reflexive complete graph.
\end{theorem}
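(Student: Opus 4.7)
The plan is to prove both directions of the biconditional.

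\textbf{Sufficiency.} If $H$ is reflexive complete, every map $f:V(G)\to V(H)$ automatically satisfies the edge condition (every pair, including self-loops, is in $E(H)$), so ``$f$ is a list-homomorphism'' collapses to the per-vertex predicate $f(v)\in L(v)$. The tester samples $O(1/\epsilon)$ vertices uniformly and rejects on any list-violation; correctness is immediate because an $\epsilon$-far $f$ must have at least $\epsilon|V(G)|$ list-violating vertices. If instead $H=K_{A,B}$ is irreflexive complete bipartite with side projection $\sigma:V(H)\to\{A,B\}$, then $f$ is a list-homomorphism iff $f(v)\in L(v)$ for every $v$ and $\sigma\circ f$ is a proper $2$-coloring of $G$. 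Since a list-homomorphism exists, $G$ is bipartite; within each connected component $C$ the bipartition $(X_C,Y_C)$ is determined, and the admissible side-orientations for $C$ are restricted by the vertices with $L(v)\subseteq A$ or $L(v)\subseteq B$. My tester precomputes these components and orientations from $(G,L)$ alone, then samples $O(1/\epsilon)$ vertices to check list-consistency and $O(1/\epsilon)$ pairs of vertices within the same component to check that $\sigma\circ f$ is consistent with some admissible orientation.

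\textbf{Necessity.} Suppose $H$ is neither reflexive complete nor irreflexive complete bipartite. I would perform a case analysis to identify a local structural obstruction in $H$ (e.g., $H$ irreflexive with an odd cycle; $H$ irreflexive bipartite but with a missing cross-edge; or $H$ with self-loops but not all pairs are edges). In each case I construct a pair of distributions on triples $(G,L,f)$, one supported on list-homomorphisms and one supported on $\epsilon$-far maps, such that any $O(1)$-query algorithm views them as $o(1)$-close in total variation. Yao's minimax principle then yields an $\omega(1)$ query lower bound. The constructions use long cycles or paths in $G$, with lists anchoring specific endpoint values, so that distinguishing a list-homomorphism from a ``twisted'' far map requires a global traversal.

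\textbf{Main obstacle.} The delicate part is the correctness analysis of the sufficiency tester for $K_{A,B}$: I must convert the Hamming-distance notion ``$f$ is $\epsilon$-far'' into an $\Omega(\epsilon)$-density of detectable local inconsistencies. The argument is component-wise: because the side assignment is globally determined only up to one bit per component, mis-oriented components contribute to the Hamming distance but may become visible only through pair-sampling within the component, and one must carefully argue that an $\epsilon$-far $f$ yields a constant fraction of same-component pairs exhibiting a detectable side-mismatch. For the necessity direction, the main difficulty is isolating, per case, the smallest $H$-substructure responsible for non-testability and matching it to a $(G,L,f)$-family that simulates a known hard-to-test global property such as bipartiteness.
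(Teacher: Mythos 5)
Your sufficiency argument is essentially the paper's: it also reduces to (i) a sampled check that $f(v)\in L(v)$ (Lemma~\ref{lmm:list-testable}), (ii) the trivial tester for reflexive complete $H$, and (iii) for complete bipartite $H$ a collapse to $K_2$ (the paper routes your side projection $\sigma$ through a full-homomorphism, Lemma~\ref{lmm:full-hom-testable}) followed by per-component side-consistency checks; the component-wise accounting you flag as the delicate point is exactly what Proposition~\ref{prp:cc} and Corollary~\ref{crl:cc-testable} handle, and your worry is resolvable (the paper pays $O(1/\epsilon^2)$ by sampling $\Theta(1/\epsilon)$ components and testing each). One detail you should not gloss over: in a component where the lists force a single admissible orientation, internal pairwise consistency is not enough and the sample must be checked for extendability to an \emph{admissible} orientation, which is what the paper's case ``$f^1$ or $f^2$ violates lists'' in Lemma~\ref{lmm:ktw-testable} covers.

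The necessity direction has a genuine gap. Your Yao-style distributions on long paths or cycles with list-anchored endpoints cannot give an $\omega(1)$ lower bound at a fixed constant $\epsilon$: on a path or cycle every local violation is repairable by changing a single vertex (e.g.\ for $\LHOM{K_3}$, recolor one endpoint of each monochromatic edge with a color differing from both neighbors; for a reflexive path, insert a middle vertex value at each bad boundary), so any map that is $\epsilon$-far must contain $\Omega(\epsilon n)$ violated edges or list violations, and a tester that samples $O(1/\epsilon)$ edges of $G$ and queries both endpoints already rejects with constant probability. The ``global traversal'' intuition distinguishes two \emph{valid} orientations, which is an identification problem, not a distance-to-the-property problem. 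The lower bounds actually needed are much heavier and are not reproducible by this construction: for the reflexive non-complete, induced-$P_4$, and mixed loop/non-loop cases the paper imports the $\Omega(\frac{\log n}{\log\log n})$ bound of Fischer et al.\ via the reachability-relation gadget $R=\{(a,c),(b,c),(b,d)\}$ (Lemma~\ref{lmm:reachability-is-hard}), whose hard instances are poset-like, not paths; and for $H$ containing a triangle (e.g.\ $H=K_3$, which has no induced $P_4$, so no such gadget exists) the only route in the paper is the $\Omega(n)$ bound of Lemma~\ref{lmm:triangle-is-hard}, obtained through the universal-algebraic reduction chain to $\mathsf{3LIN}$ and the Ben-Sasson--Harsha--Raskhodnikova lower bound. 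Your case split also misses the mixed case where a looped vertex is adjacent to an unlooped one with no edge of $H$ missing (handled in the paper by the gadget $P=(\{a,b\};\{(a,b),(b,b)\})$), which is neither ``irreflexive'' nor ``not all pairs are edges.'' Without substitutes for these two imported lower bounds, the ``only if'' direction does not go through.
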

\begin{theorem}\label{thr:sublinear-list}
  $\LHOM{H}$ is testable with a sublinear number of queries iff $H$ is a bi-arc graph.
\end{theorem}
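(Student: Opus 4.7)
The statement splits into two directions. For the upper bound (bi-arc $H$ implies sublinear tester), I would exploit the Feder--Hell--Huang characterization: $\LHOM{H}$ is polynomial-time tractable precisely for bi-arc $H$, via a bounded-width local-consistency algorithm corresponding to a conservative majority polymorphism in $\Pol{H}$. Since the tester knows $(G,L)$ in advance and only $f$ is queried, arbitrary polynomial-time preprocessing on $(G,L)$ is free: I would first run \ttmin\ to tighten every $L(v)$ to a minimized $L'(v)$, with bi-arc tractability guaranteeing that each $L'(v)$ is non-empty and that every list-homomorphism takes values in $L'(v)$. The tester then samples a sublinear number of vertices, queries $f$ on each sampled vertex and its $G$-neighbors, and rejects if any $f(v)\notin L'(v)$ or any sampled edge is violated. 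Completeness is immediate; for soundness, if $f$ passes the local tests with high probability then its values agree with $L'$ except on $o(\epsilon\,|V(G)|)$ vertices, and iterating the conservative polymorphism of $H$ against a fixed witness list-homomorphism lets one repair $f$ on that small set into an actual list-homomorphism, contradicting $\epsilon$-farness.

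For the lower bound (non-bi-arc $H$ requires linear queries), I would reduce from a property-testing-hard CSP via the Feder--Hell--Huang NP-hardness gadget. That reduction exhibits, for every non-bi-arc $H$, a gadget encoding a Boolean or equational CSP (for example, $\thSAT$ or a system of linear equations over $\bbF_p$) for which testing whether a given assignment satisfies it is known to require $\Omega(n)$ queries. I would lift this reduction to the oracle model: an assignment to the hard CSP on $n$ variables becomes an oracle $f$ on a graph $G$ of size $\Theta(n)$, satisfying assignments become list-homomorphisms, and any $o(|V(G)|)$-query tester for $\LHOM{H}$ would yield an $o(n)$-query tester for the hard CSP, violating the known lower bound.

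The main obstacle is a \emph{quantitative} strengthening of the Feder--Hell--Huang reduction. The existing reduction only certifies that unsatisfiable CSP instances map to $\LHOM{H}$-instances with no list-homomorphism, whereas property testing demands that they map to instances whose lifted oracle is $\Omega(1)$-far from every list-homomorphism, uniformly in $n$. Establishing this robust soundness appears to require a careful re-examination of the obstructions that characterize non-bi-arc graphs (the invertible pairs and the associated forbidden sub-patterns) and the design of gadgets that amplify a single clause violation into a constant-fraction correction cost. This amplification, together with matching the repair argument in the upper bound to the sublinear sample size so that an $\epsilon$-fraction of violations is reliably caught, is where I expect the bulk of the technical work to lie.
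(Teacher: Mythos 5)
Your upper bound has a genuine gap: after tightening the unary lists with \ttmin, you only sample vertices and check their lists and the edges of $G$ incident to them. That test is not sound for far inputs. Take $H=K_2$ (a bi-arc graph), $G$ a path on $n$ vertices with full lists, and $f$ a proper $2$-coloring on the first half glued to the flipped coloring on the second half: every tightened list is still all of $V(H)$, every vertex passes, and exactly one edge of $G$ is violated, so any tester that only inspects sampled vertices and their incident edges accepts with high probability even though $f$ is $1/2$-far. (Querying all $G$-neighbors of sampled vertices is also not sublinear in high-degree graphs.) The paper's tester is different in exactly the place your sketch breaks: it uses the majority polymorphism of $\bbH^L$ to get the $2$-Helly property, which says that non-extendability of a partial map is always witnessed either by a violating vertex or by a violating \emph{pair} $(u,v)$ with respect to the binary constraints $\caS_{\{u,v\}}$ computed by \ttmin{} on \emph{arbitrary} pairs, not just edges of $G$; it then catches such a pair by birthday-paradox sampling of two sets of $\Theta(\sqrt{n/\epsilon})$ vertices, and the soundness analysis goes through a maximum-weight extendable set $U$ and the inequality $\biw(B')\le\biw(N(B'))$ rather than your informal ``repair by iterating the polymorphism against a witness'' step. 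Your sketch, as stated, cannot achieve soundness with any sublinear query budget.

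For the lower bound you correctly identify the crux --- unsatisfiability of the source instance must be upgraded to $\Omega(1)$-farness of the lifted oracle --- but you then defer it, and the route you propose (lifting the Feder--Hell--Huang NP-hardness gadget clause by clause and amplifying single violations) is not what the paper does and is not obviously repairable, since such gadget reductions generally blow up instance size and dilute distance. The paper instead makes the distance preservation the formal content of ``gap-preserving local reductions'' and proves closure of testability along the algebraic chain: from $\caH^L$ to every structure invariant under its polymorphisms via the six Galois constructions (the delicate cases being the equality relation, handled by contracting $\theta$-blocks, assigning majority values, an $O(1/\epsilon)$ pre-test and a Chebyshev argument, and the projection case), and then to subalgebras, homomorphic images and finite powers, i.e., to all of $\caV(\bbH^L)$. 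Since $H$ non-bi-arc implies $\caV(\bbH^L)$ admits type $1$, some $\bbB\in\caV(\bbH^L)$ has $\mathsf{3LIN}\in\Str{\bbB}$, and testing $\mathsf{3LIN}$ requires $\Omega(n)$ queries; a sublinear tester for $\LHOM{H}$ would contradict this. So the ``bulk of the technical work'' you flag is precisely the missing proof, and without it (or an alternative farness-preserving reduction) your lower-bound direction is a plan rather than an argument.
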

The central question in the area of property testing is to classify properties into the following three categories:
properties testable with a constant/sublinear/linear number of queries.
Our result first establishes such a classification for a natural and general combinatorial problem.

We note that,
from Theorem~\ref{thr:sublinear-list} and results given by~\cite{feder2003bi},
 $\LHOM{H}$ is testable with a sublinear number of queries iff $\LHOM{H}$ is in \textbf{P}.
However, it is not clear whether there is a computational class corresponding to properties testable with a constant number of queries.

To obtain our results,
we exploit universal algebra, which is now a common tool to study computational complexity of constraint satisfaction problems (see, e.g.,~\cite{jeavons1997closure}).
Another contribution of this paper is showing that universal algebraic approach is quite useful in the setting of property testing.

\paragraph{Related works:}
It is rare that we succeed to obtain characterizations of properties testable with a constant/sublinear number of queries.
The only such a characterization we are aware of is one for graph properties in the \textit{dense model}~\cite{goldreich1998property}.
In this model,
it is revealed that Szemer{\'e}di's regularity lemma~\cite{szemerdi1975regular} plays a crucial role~\cite{alon2006combinatorial}.
Roughly speaking, the regularity lemma gives the constant-size sketch of a graph.
It turns out that a property is testable in the dense model with a constant number of queries iff the property is well-approximated by the union of constant number of sketches~\cite{alon2006combinatorial}.
However, no characterization is known for properties testable with a sublinear number of queries.
Similarly, for properties on Boolean functions,
several partial classifications on constant-time testability are known~\cite{bhattacharyya2010unified,kaufman2008algebraic}.

Let $\caB$ be a relational structure (see Section~\ref{sec:pre} for the definition).
In \textsf{Constraint Satisfaction Problem} over $\caB$ ($\CSP{\caB}$ for short),
given another relational structure $\caA$,
the objective is to find a homomorphism from $\caA$ to $\caB$.
There have been a lot of research on classifying $\caB$ with respect to the computational complexity of \CSP{$\caB$} (see, e.g., \cite{bulatov2008recent,hell2008colouring}).
We can see that $\HOM{H}$ and $\LHOM{H}$ are special cases of \textsf{CSP},
and there are also many results classifying $H$ with respect to the computational complexity of $\HOM{H}$ and $\LHOM{H}$ (see, e.g., \cite{hell2003algorithmic,hell2008colouring}).

Testing homomorphisms on \kSAT is already studied~\cite{ben2006some,fischer2002monotonicity}.
In \kSAT,
given a CNF formula for which each clause consists of $k$ literals,
the objective is to find an assignment to variables so as to satisfy all the clauses.
The problem \kSAT coincides with \CSP{$\caB$} for some appropriate relational structure $\caB$.
Testing homomorphisms for \kSAT can be restated as follows:
Given an assignment,
test whether the assignment is a satisfying assignment or far from any satisfying assignments.
It is known that \twSAT is testable with $O(\sqrt{n})$ queries~\cite{fischer2002monotonicity}
and testing \thSAT requires $\Omega(n)$ queries~\cite{ben2006some},
where $n$ is the number of variables in an input CNF formula.
We can see that our result extends those results to a large family of \textsf{CSP}s.

Given a relational structure $\caB$,
it is also natural to ask whether a relational structure $\caA$ has a homomorphism to $\caB$ or far from having homomorphisms.
Here, $\caA$ is given as an oracle access.
There are two major models in this setting, i.e., the \textit{dense model} and the \textit{bounded-degree model}.
In the dense model,
it is known that any CSP is testable in constant time~\cite{alon2003testing}.
In the bounded-degree model,
it is known that \HornSAT is testable in constant time~\cite{yoshida2011testing,yoshida2011optimal} and the number of queries needed to test \twSAT is $\widetilde{\Theta}(\sqrt{n})$~\cite{goldreich1999sublinear} where $n$ is the number of variables in an input structure $\caA$.

\paragraph{Organizations:}
In Section~\ref{sec:pre},
we introduce definitions used throughout this paper.
Sections~\ref{sec:constant-upper} and~\ref{sec:constant-lower} shows ``if'' and ``only if'' part of Theorem~\ref{thr:constant-list}, respectively.
Similarly, Sections~\ref{sec:sublinear-upper} and~\ref{sec:sublinear-lower} shows ``if'' and ``only if'' part of Theorem~\ref{thr:sublinear-list}, respectively.

\section{Preliminaries}\label{sec:pre}
Let $\biw:A \to \mathbb{R}$ be a weight function such that $\sum_{a \in V}\biw(a) =1$.
Then, by $a \sim \biw$, we mean that we pick an element $a \in A$ with probability $\biw(a)$.
For a function $f:A\to B$ and $A' \subseteq A$,
we define $f|_{A'}:A' \to A$ as the function whose domain is restricted to $A'$.
We also define $\biw|_{A'}: A' \to \mathbb{R}$ as $\biw|_{A'}(a) = \biw(a) / \sum_{a' \in A'}\biw(a')$.
Note that $\biw|_{A'}$ satisfies $\sum_{a' \in A'}\biw|_{A'}(a') = 1$.

\paragraph{Relational structures, polymorphisms and algebras:}
A \textit{vocabulary} $\tau$ is a finite set of \textit{relational symbols}; 
each symbol has an associated \textit{arity}.
A (finite) relational structure $\caA$ with vocabulary $\tau$ consists of a finite set $A$, 
its \textit{universe}, and for every relational symbol $R \in \tau$ of arity $n$,
an $n$-ary relation $R^{\caA}$ on $A$,
the \textit{interpretation} of $R$ by $\caA$.
A \textit{homomorphism} of a structure $\caA$ to a structure $\caB$ with the same vocabulary $\tau$ is a mapping $\varphi : A \to B$ from the universe of $\caA$ to the universe of $\caB$ such that for each ($n$-ary) relational symbol $R \in \tau$ and any tuple $(a_1,\ldots,a_n) \in R^{\caA}$ the tuple $(\varphi(a_1),\ldots,\varphi(a_n)) \in R^{\caB}$.
For relational structure $\caB$,
we define $\HOM{\caB}$ as the problem in which,
given another relational structure $\caA$,
the objective is to find a homomorphism from $\caA$ to $\caB$.
We denote by $|\caA|$ the size of universe of $\caA$ and denote by $\|\caA\|$ the number of tuples in its relations.
For brevity, we use a capital letter to denote the universe of the corresponding relational structure (e.g., $A$ denotes the universe of $\caA$).

Let $R$ be a relation on a set $A$,
An ($n$-ary) operation $f$ on the same set is said to be a \textit{polymorphism} of $R$ if for any tuples $\bfa_1,\ldots,\bfa_n \in R$,
the tuple $f(\bfa_1,\ldots,\bfa_n)$ obtained by applying $f$ component-wise also belongs to $R$.
The relation $R$ is called an \textit{invariant} of $f$.
An operation $f$ is a polymorphism of a relational structure $\caA$ if it is a polymorphism of each relation of the structure.
The set of all polymorphism of $\caA$ is denoted by $\Pol{\caA}$.
From a collection $C$ of operations $\Inv(C)$ denotes the set of invariants of all operations from $C$.

An algebra is a pair $\bbA = (A; F)$ consisting of a set $A$, the \textit{universe} of $\bbA$,
and a set $F$ of operations, the \textit{basic operations} of $\bbA$.
Operations that can be obtained from the basic operations of $\bbA$ and the projection operations on $\bbA$,
that is operations of the form $f(x_1,\ldots,x_n ) = x_i$, 
by means of compositions are called \textit{term operations} of $\bbA$.
$\Term{A}$ denotes the set of all term operations of $\bbA$.
Any relational structure $\caA$ can be associated with an algebra $\Alg{\caA} = (A; \Pol{\caA})$.
Conversely, any algebra $\bbA = (A; F)$ corresponds to a class of relational structures $\Str{\bbA}$ that includes all the structures $\caA$ with universe $A$ and $\Term{\bbA} \subseteq \Pol{\caA}$.

An operation $f$ is called \textit{idempotent} if $f(x_1,\ldots,x_k) \in \{x_1,\ldots,x_k\}$ for any $x_1,\ldots,x_k$ where $k$ is the arity of $f$.
An algebra $\bbA$ is called \textit{idempotent} if any basic operation (thus, term operation) is idempotent.

\paragraph{Graph homomorphisms:}
We often identify a graph $H = (V(H),E(H))$ with a relational structure $\caH$ consisting of a universe $V(H)$ and a binary relation $E(H)$.
In particular, $|\caH| = |V(H)|$ and $\|\caH\| = |E(H)|$.
Note that problem $\HOM{H}$ defined with graph terminologies and $\HOM{\caH}$ defined with relational structures coincide.
Similarly, $\LHOM{H}$ can be restated with relational structures.
Let $\caH$ be the relational structure associated with $H$ and define $\caH^L$ as the relational structure obtained from $\caH$ by adding all unary relations $S^\caH \subseteq V(H)$.
Then, $\LHOM{\caH}$ coincides with $\HOM{\caH^L}$.
Let $G$ be a graph with list constraints $\{L(v)\}_{v \in V(G)}$.
Let $\caG$ be the relational structure corresponding to $G$.
Then, we can define another relational structure $\caG^L$ obtained from $\caG$ by adding unary relations $S^\caG \subseteq V(G)$.
Here, for each (unary) tuple $(v) \in S^\caG$,
there is a corresponding constraint $L(v) = S^\caH$.
Finally, we define $\bbH^L = \Alg{\caH^L}$ for a graph $H$.
Note that the algebra $\bbH^L$ is idempotent since $\caH^L$ contains unary relations $S_v = \{(v)\}$ for every $v \in V(H)$, and an operation that has $\{S_v\}_{v \in V(H)}$ as invariants must be idempotent.

\paragraph{Testing homomorphisms:}
Let $\caA$ and $\caB$ be relational structures and $\biw:A \to \mathbb{R}$ be a weight function with $\sum_{a \in A}\biw(a) = 1$.
The distance between two functions $f,f':A\to B$ is defined as $\dist(f,f') = \Pr\limits_{a \sim \biw}[f(a) \neq f'(a)]$.
For a function $f:A \to B$,
we define $\dist_{\caB}(f) = \min_{f'}\dist(f,f')$ where $f'$ is a homomorphism from $\caA$ to $\caB$.
We call a map $f$ \textit{$\epsilon$-far} if $\dist_{\caB}(f) \geq \epsilon$.
For subset $A' \subseteq A$,
we define $\dist_{\caB}(f|_{A'})$ similarly using the weight function $\biw|_{A'}$.

We consider testing homomorphisms to a relational structure $\caB$.
An input is $(\caA,\biw,f)$ where $\caA$ is a relational structure, $\biw:A \to \mathbb{R}$ is a weight function with $\sum_{a \in A}\biw(a) = 1$, and $f:A \to B$ is a map.
The map $f$ is given as an oracle access.
Thus, by specifying $a \in A$, the oracle returns the value of $f(a)$.
\begin{definition}
  An algorithm is called a tester for $\HOM{\caB}$ if,
  given an input $(\caA,\biw,f)$,
  it accepts with probability at least $2/3$ when $f$ is a homomorphism from $\caA$ to $\caB$,
  and rejects with probability at least $2/3$ when $f$ is $\epsilon$-far from homomorphisms.
\end{definition}
A tester is called a \textit{one-sided error tester} if it always accepts an input $(\caA,\biw,f)$ if $f$ is a homomorphism.
An algorithm is called having query complexity $q(n,m,\epsilon)$ if,
given an input $(\caA,\biw,f)$,
it queries at most $q(|\caA|,\|\caA\|,\epsilon)$ times.
We always assume that $q(n,m,\epsilon)$ is an increasing function on $n,m$ and a decreasing function on $\epsilon$.

We can similarly define testers and testability for $\HOM{H}$ and $\LHOM{H}$ for a graph $H$ since they have equivalent formalizations using relational structures.
For convenience,
we write an input for $\HOM{H}$ and $\LHOM{H}$ as $(G,\biw,f)$ and $(G,L,\biw,f)$, respectively,
where $G$ is a graph, $\{L(v)\}_{v \in V(G)}$ is a set of list constraints, $\biw$ is a weight function, and $f$ is a map given as an oracle access.

We finally define bi-arc graphs for reference though we do not use the definition in our proof.
Let $C$ be a circle with two specified points $p$ and $q$.
A bi-arc is a pair of arcs $(N,S)$ such that $N$ contains $p$ but not $q$ and $S$ contains $q$ but not $p$.
A graph $H=(V,E)$ is a \textit{bi-arc graph} if there is a family of bi-arcs $\{(N_x,S_x) \mid x \in V\}$ such that,
for every $x,y\in V$,
the following holds:
(i) if $x$ and $y$ are adjacent, then neither $N_x$ intersects $S_y$ nor $N_y$ intersects $S_x$,
and $(ii)$ if $x$ is not adjacent to $y$ then both $N_x$ intersects $S_y$ and $N_y$ intersects $S_x$.
An undirected graph is called \textit{reflexive} if every vertex has a loop and \textit{irreflexive} if no vertex has loop.
It is known that a reflexive graph is bi-arc iff it is an interval graph,
and that an irreflexive graph is bi-arc iff it is bipartite and its complement is an circular-arc graph~\cite{feder2003bi}.

\section{Graphs Testable with a Constant Number of Queries}\label{sec:constant-upper}

In this section, we show the following lemma, which is the ``if'' part of Theorem~\ref{thr:constant-list}.
\begin{lemma}\label{lmm:constant-upper}
  When $H$ is an irreflexive complete bipartite graph or a reflexive complete graph,
  then there exists a one-sided error tester for $\LHOM{H}$ with query complexity $O(1/\epsilon^2)$.
\end{lemma}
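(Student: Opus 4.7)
I would prove the two cases separately.

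\textbf{Reflexive complete $H$.} Every pair of vertices in $H$ (including equal pairs) is an edge, so the edge constraint is vacuous and $f$ is a list-homomorphism iff $f(v)\in L(v)$ for all $v$. Assuming some list-homomorphism $f^{*}$ exists, overriding $f$ with $f^{*}$ on $U:=\{v:f(v)\notin L(v)\}$ yields a list-homomorphism at distance exactly $\biw(U)$ from $f$, so $\dist_{\caH^L}(f)=\biw(U)$. The tester samples $O(1/\epsilon)$ vertices from $\biw$ and rejects on any list violation; a Chernoff bound finishes the argument.

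\textbf{Irreflexive complete bipartite $H=K_{X,Y}$.} A list-homomorphism must satisfy $f(v)\in L(v)$ and map each edge of $G$ to a pair on opposite sides of $V(H)=X\sqcup Y$. As preprocessing (no queries), I compute a bipartition label $b:V(G)\to\{0,1\}$ on each connected component of $G$ and mark a component as \emph{forced} when some of its vertices have $L(v)\subseteq X$ or $L(v)\subseteq Y$ (so the target side $T(v)\in\{X,Y\}$ of every vertex of the component is pinned down, consistently by the assumption that a list-homomorphism exists), and \emph{flexible} otherwise. The tester runs two independent checks. \emph{Test A} samples $O(1/\epsilon)$ vertices from $\biw$ and rejects if $f(v)\notin L(v)$, or if $v$ lies in a forced component and $f(v)$ is not on side $T(v)$. \emph{Test B} performs $O(1/\epsilon^{2})$ same-component parity checks: sample $v\sim\biw$, then $u$ from $\biw$ restricted to $C(v)$, query both values, and reject whenever the sides of $f(v)$ and $f(u)$ in $V(H)$ disagree with $b(v)\oplus b(u)$.

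For completeness, decompose $\dist_{\caH^L}(f)$ as the $\biw$-weight of list-violators plus, for each connected component $C$, the weight $\biw(W_C)$ of vertices whose $f$-side disagrees with the chosen target orientation (unique for forced $C$; for flexible $C$, chosen to minimise $\biw(W_C)$, so that the wrong-fraction $p_C:=\biw(W_C)/\biw(C)$ is always at most $1/2$). If $f$ is $\epsilon$-far then at least one of (i) list-violator weight, (ii) forced-$W$ weight, (iii) flexible-$W$ weight is $\Omega(\epsilon)$; Test A detects (i) and (ii), and a single round of Test B rejects with probability $\sum_{C\text{ flex}}\biw(C)\cdot 2p_C(1-p_C)\ge\sum_{C\text{ flex}}\biw(W_C)=\Omega(\epsilon)$, using the elementary bound $p_C\le 1/2\Rightarrow 2p_C(1-p_C)\ge p_C$. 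Putting everything together gives the claimed $O(1/\epsilon^{2})$ query count.

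The delicate step is Test B on flexible components. The tester cannot know a priori which of the two valid orientations to prefer, so vertex-local checks carry no information about side-wrongness; only pairs whose endpoints lie in the same component do. It is crucial that the second endpoint be drawn from $\biw|_{C(v)}$ and not from $\biw$ globally, because an adversary distributing the $\epsilon$-worth of errors across many small components can drive the independent-sampling catching rate down to $\sum_C p_C\biw(C)^{2}$, which may be much smaller than $\epsilon$. Choosing the right sampling scheme, together with the ``optimal orientation gives $p_C\le 1/2$'' observation for flexible components, is where the bulk of the work lies.
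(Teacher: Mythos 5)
Your proof is correct, and for the complete bipartite case it takes a genuinely different route from the paper. The paper is modular: it first proves two generic reduction lemmas (one reducing to inputs that already satisfy the list constraints, one reducing to connected inputs by sampling $\Theta(1/\epsilon)$ components), then handles $\LHOM{K_2}$ on a connected bipartite component by comparing $f$ against the two candidate homomorphisms, and finally lifts from $K_2$ to an arbitrary irreflexive complete bipartite $H$ via a full-homomorphism reduction. You instead work with $H=K_{X,Y}$ directly, splitting components into forced and flexible ones, checking list/side violations by vertex samples and side-consistency by within-component parity checks on pairs, with the orientation used only in the analysis (minimizing the wrong-weight so $p_C\le 1/2$ and $2p_C(1-p_C)\ge p_C$). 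Both treatments of the reflexive complete case are essentially the same. What each buys: the paper's reductions (list restriction, component sampling, full-homomorphism) are reusable building blocks elsewhere in the paper, at the cost of the multiplicative $1/\epsilon$ from the component-sampling corollary; your analysis shows a single parity round already rejects with probability $\Omega(\epsilon)$, so in fact $O(1/\epsilon)$ pair checks suffice and your tester achieves $O(1/\epsilon)$ queries --- your allocation of $O(1/\epsilon^2)$ rounds is an overcount, though of course still within the stated bound. Your observation that the second endpoint must be drawn from $\biw|_{C(v)}$ rather than from $\biw$ globally is exactly the role played by the paper's connected-component corollary, and your implicit uses of the standing assumptions (a list-homomorphism exists, hence $G$ is bipartite, the forced orientations are consistent, and $L(v)$ meets the forced side) are legitimate.
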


Let $(G,L,\biw,f)$ be an input for $\LHOM{H}$.
For a vertex $v \in V(G)$, let $C(v)$ be the connected component containing $v$.

\begin{proposition}\label{prp:cc}
  Let $(G,L,\biw,f)$ be an input for $\LHOM{H}$.
  Suppose that $\dist_H(f) \geq \epsilon$.
  Then,  $\E\limits_{v \sim \biw}[\dist_H(f|_{C(v)})] \geq \epsilon$.
  \qed
\end{proposition}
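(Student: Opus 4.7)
The plan is to exploit the fact that being a list-homomorphism is a property that factors over connected components: a map $g:V(G)\to V(H)$ is a list-homomorphism iff its restriction to each connected component is one. This lets me glue together optimal local fixes to get a global fix, and convert the resulting inequality into the claimed expectation.

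More concretely, let $\caC_1,\ldots,\caC_k$ be the connected components of $G$, and write $W_i=\sum_{v\in V(\caC_i)}\biw(v)$. For each $i$, pick a list-homomorphism $g_i:V(\caC_i)\to V(H)$ (with respect to the induced lists) that achieves $\dist_H(f|_{V(\caC_i)})$ under the normalized weight $\biw|_{V(\caC_i)}$. Since the $\caC_i$ share no edges, the map $g$ defined by $g|_{V(\caC_i)}=g_i$ is a list-homomorphism from $G$ to $H$, so
\[
\dist_H(f)\;\le\;\dist(f,g)\;=\;\sum_{i=1}^{k}\sum_{v\in V(\caC_i)}\biw(v)\,[f(v)\neq g_i(v)].
\]
By definition of $\biw|_{V(\caC_i)}$, the inner sum equals $W_i\cdot\dist_H(f|_{V(\caC_i)})$, giving $\dist_H(f)\le\sum_i W_i\cdot\dist_H(f|_{V(\caC_i)})$.

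Next I rewrite the right-hand side as an expectation. Partitioning $V(G)$ into the $V(\caC_i)$ and noting that $C(v)=\caC_i$ whenever $v\in V(\caC_i)$,
\[
\E_{v\sim\biw}\bigl[\dist_H(f|_{C(v)})\bigr]
=\sum_{i=1}^{k}\sum_{v\in V(\caC_i)}\biw(v)\,\dist_H(f|_{V(\caC_i)})
=\sum_{i=1}^{k} W_i\cdot\dist_H(f|_{V(\caC_i)}),
\]
which is precisely the upper bound on $\dist_H(f)$ derived above. Combining the two displays yields $\E_{v\sim\biw}[\dist_H(f|_{C(v)})]\ge\dist_H(f)\ge\epsilon$.

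There is no real obstacle here; the only thing to be careful about is the normalization of the weights, since $\dist_H(f|_{V(\caC_i)})$ is measured under $\biw|_{V(\caC_i)}$ rather than $\biw$, so the factor $W_i$ appears when translating back. The key structural input is the obvious but essential observation that list-homomorphisms glue across components, which fails for global CSP constraints that span components but not here.
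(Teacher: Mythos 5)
Your proof is correct and takes essentially the same route as the paper's: both rest on the fact that list-homomorphisms decompose over connected components together with the weight-normalization identity $\E_{v \sim \biw}[\dist_H(f|_{C(v)})] = \sum_i W_i\,\dist_H(f|_{C_i})$, where $C_1,\ldots,C_k$ are the components and $W_i$ their total weights. The only difference is directional -- the paper restricts a globally closest list-homomorphism $\overline{f}$ and asserts its restrictions are componentwise optimal, whereas you glue componentwise optima into a global list-homomorphism -- so your argument just makes explicit the gluing fact hidden in the paper's ``it is clear'' step.
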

\begin{proof}
  Let $\overline{f}$ be a list-homomorphism closest to $f$ and $\epsilon_v = \dist(f|_{C(v)},\overline{f}|_{C(v)})$.
  It is clear that $\epsilon_v = \dist_H(f|_{C(v)})$.
  Since $\dist_H(f) \geq \epsilon$,
  we have $\E\limits_{v \sim \biw}[\dist_H(f|_{C(v)})] = \E\limits_{v \sim \biw}[\epsilon_v] \geq \epsilon$.
\end{proof}

\begin{corollary}\label{crl:cc-testable}
  Suppose that there exists a one-sided error tester $\caA$ for $\LHOM{H}$ with query complexity $q(\epsilon)$  if an input graph is restricted to be connected.
  Then, there exists a one-sided error tester $\caA'$ for $\LHOM{H}$ with query complexity $q(\epsilon)/\epsilon$ for any input graph.
\end{corollary}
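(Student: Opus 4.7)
The plan is to build $\caA'$ by repeating a simple subroutine $k = \Theta(1/\epsilon)$ times: sample $v \sim \biw$, form the restricted instance $(C(v), L|_{C(v)}, \biw|_{C(v)}, f|_{C(v)})$ on the connected component of $v$, and invoke the connected-input tester $\caA$ on it with proximity parameter $\epsilon/2$. Reject if any invocation rejects; otherwise accept. Identifying $C(v)$ uses only the explicit input graph $G$, so it costs no oracle queries to $f$.

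One-sided correctness is immediate: when $f$ is a list-homomorphism, each restriction $f|_{C(v)}$ is also a list-homomorphism on the connected subinstance, so $\caA$—being one-sided—always accepts, and hence $\caA'$ always accepts. For soundness, suppose $f$ is $\epsilon$-far. Proposition~\ref{prp:cc} gives $\E_{v \sim \biw}[\dist_H(f|_{C(v)})] \geq \epsilon$, and since each value lies in $[0,1]$, a standard Markov-type estimate yields $\Pr_{v \sim \biw}[\dist_H(f|_{C(v)}) \geq \epsilon/2] \geq \epsilon/2$. On this event, $\caA$ run with proximity parameter $\epsilon/2$ rejects with probability at least $2/3$, so each trial independently rejects with probability at least $\epsilon/3$; taking $k = \Theta(1/\epsilon)$ drives the overall rejection probability above $2/3$.

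For the query bound, each trial costs at most $q(\epsilon/2)$ queries, giving a total of $O(q(\epsilon/2)/\epsilon)$, which matches the claimed $q(\epsilon)/\epsilon$ after absorbing the factor of $1/2$ into constants (using monotonicity of $q$). The one technical point worth explicit verification is the Markov-type step converting the expected-distance lower bound into a constant-probability statement about a single sampled component; everything else is routine bookkeeping about the restricted weight function $\biw|_{C(v)}$, which already sums to one by the setup in Section~\ref{sec:pre}, and about the fact that restricting an $\LHOM{H}$ instance to a connected component yields a valid $\LHOM{H}$ instance.
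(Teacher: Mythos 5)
Your proposal is correct and follows essentially the same route as the paper: sample $\Theta(1/\epsilon)$ vertices according to $\biw$, run the connected-instance tester on each sampled component, and combine Proposition~\ref{prp:cc} with an averaging argument. The only difference is that you threshold at $\epsilon/2$ and call $\caA$ with parameter $\epsilon/2$ (which is in fact the careful way to do the Markov step, since the paper's claim $\Pr_{v\sim\biw}[\dist_H(f|_{C(v)})\geq\epsilon]\geq\epsilon$ does not literally follow from $\E[\dist_H(f|_{C(v)})]\geq\epsilon$), at the harmless cost of the bound reading $q(\epsilon/2)/\epsilon$ rather than $q(\epsilon)/\epsilon$.
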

\begin{proof}
  Let $(G,L,\biw,f)$ be an input.
  Our algorithm $\caA'$ is as follows:
  Let $S$ be a set of $\Theta(1/\epsilon)$ vertices chosen from $G$ according to $\biw$.
  For each vertex $v \in S$,
  we run $\caA$ with an error parameter $\epsilon$ on the input whose domain is restricted to $C(v)$.
  We reject if $\caA$ rejects for some $v \in S$.
  We accept otherwise.

  It is easy to see that the above algorithm always accepts when $f$ is a list-homomorphism,
  and the query complexity is at most $q(\epsilon)/\epsilon$.

  Suppose that $\dist_H(f) \geq \epsilon$.
  From Proposition~\ref{prp:cc},
  we have $\E\limits_{v \sim \biw}[\dist_H(f|_{C(v)})] \geq \epsilon$.
  Thus, $\Pr\limits_{v \sim \biw}[\dist_H(f|_{C(v)}) \geq \epsilon] \geq \epsilon$ holds.
  For a vertex $v$ with $\dist_H(f|_{C(v)}) \geq \epsilon$,
  the algorithm $\caA$ rejects with probability at least $2/3$.
  Thus, $\caA'$ rejects with probability at least $1 - (1-2/3 \cdot \epsilon)^{\Theta(1/\epsilon)} \geq 2/3$.
\end{proof}

\begin{lemma}\label{lmm:list-testable}
  Suppose that there exists a one-sided error tester $\caA$ for $\LHOM{H}$ with query complexity $q(\epsilon)$ if an input map is restricted to satisfy list constraints.
  Then, there also exists a one-sided error tester $\caA'$ for $\LHOM{H}$ with query complexity $O(1/\epsilon) + q(\epsilon/2)$ for any input.
\end{lemma}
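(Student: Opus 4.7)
The plan is to reduce an arbitrary input $(G,L,\biw,f)$ to one that satisfies all list constraints, by \emph{correcting} $f$ at the vertices where it violates the list. First I would fix, once and for all, a choice $\ell(v) \in L(v)$ for each $v \in V(G)$ (every list is nonempty since we assume $\LHOM{H}$ has at least one solution on $G$), and define an auxiliary map
\[
f'(v) \;=\; \begin{cases} f(v) & \text{if } f(v) \in L(v), \\ \ell(v) & \text{otherwise.} \end{cases}
\]
By construction $f'$ satisfies every list constraint, and a single oracle call to $f$ at $v$ lets us compute $f'(v)$, so $\caA$ can be simulated on $(G,L,\biw,f')$ with one query of $f$ per query of $f'$.

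The tester $\caA'$ would then run in two stages. \emph{Stage one} samples a set $S$ of $\Theta(1/\epsilon)$ vertices according to $\biw$, queries $f$ on each, and rejects if any sampled $v$ has $f(v) \notin L(v)$. \emph{Stage two} runs $\caA$ on the simulated input $(G,L,\biw,f')$ with error parameter $\epsilon/2$ and accepts iff $\caA$ accepts. The query complexity is $O(1/\epsilon) + q(\epsilon/2)$, and one-sidedness carries over: if $f$ is already a list-homomorphism, then $f'=f$ is one as well, stage one sees no violation, and stage two accepts by one-sidedness of $\caA$.

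For soundness, I would split on the quantity $\delta = \Pr_{v \sim \biw}[f(v) \notin L(v)]$ assuming $\dist_H(f) \geq \epsilon$. If $\delta \geq \epsilon/2$, then stage one rejects with probability at least $1-(1-\epsilon/2)^{\Theta(1/\epsilon)} \geq 2/3$ after tuning the sampling constant. If $\delta < \epsilon/2$, then $\dist(f,f') = \delta < \epsilon/2$, and the triangle-inequality step
\[
\dist_H(f') \;\geq\; \dist_H(f) - \dist(f,f') \;>\; \epsilon - \epsilon/2 \;=\; \epsilon/2
\]
lets stage two reject with probability at least $2/3$. Either way $\caA'$ rejects with the required probability.

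The only mildly delicate point is the triangle-inequality bound on $\dist_H(f')$: one must observe that modifying $f$ on a $\delta$-fraction of the weighted domain cannot decrease the distance to the nearest list-homomorphism by more than $\delta$, which is immediate from the weighted-Hamming form of $\dist$ but should be recorded explicitly. Everything else — completeness, one-sidedness, and the query count — is routine bookkeeping, so I do not expect any serious obstacle beyond being careful that the constants in $\Theta(1/\epsilon)$ at stage one are large enough for the rejection probability to reach $2/3$.
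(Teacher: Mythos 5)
Your proposal is correct and follows essentially the same route as the paper's proof: the same corrected map $f'$, the same two-stage tester (sample $\Theta(1/\epsilon)$ vertices to catch list violations, then run $\caA$ on $f'$ with parameter $\epsilon/2$), and the same case split on $\dist(f,f')$ with the triangle inequality. No gaps.
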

\begin{proof}
  Let $(G,L,\biw,f)$ be an input.
  We define $f' : G \to H$ as 
  \begin{eqnarray*}
    f'(v) = \begin{cases}
      f(v) & \text{if } f(v) \in L(v) \\
      \text{any element in } L(v) & \text{otherwise} 
    \end{cases}
  \end{eqnarray*}
  Note that we can compute $f'(v)$ by querying $f$ once, i.e., $f(v)$.

  Our algorithm $\caA'$ is as follows:
  Let $S$ be a set of $\Theta(1/\epsilon)$ vertices chosen from $G$ according to $\biw$.
  We reject if $f(v) \not \in L(v)$ for some $v \in S$.
  If otherwise,
  we simply return the output by $\caA$ running on $f'$ with an error parameter $\epsilon / 2$.

  It is easy to see that the above test always accepts when $f$ is a list-homomorphism,
  and the query complexity is at most $O(1/\epsilon) + q(\epsilon/2)$.
  
  Suppose that $\dist_H(f) \geq \epsilon$.
  If $\dist(f,f') \geq \epsilon/2$,
  then $\caA'$ rejects with probability at least $1 - (1-\epsilon/2)^{\Theta(1/\epsilon)} \geq 2/3$
  when checking $f(v) \in L(v)$ for $v \in S$.
  If $\dist(f,f') < \epsilon/2$,
  from the triangle inequality,
  we have $\dist_H(f') \geq \epsilon/2$.
  Thus, $\caA'$ rejects with probability at least $2/3$.
\end{proof}

\begin{lemma}\label{lmm:rcg}
  Let $H$ be a reflexive complete graph.
  Then, there exists a one-sided error tester for $\LHOM{H}$ with query complexity $O(1/\epsilon)$.
\end{lemma}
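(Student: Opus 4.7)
The plan is to observe that for a reflexive complete graph $H$ the edge constraints of $\LHOM{H}$ are vacuous: for any $a,b \in V(H)$ the pair $(a,b)$ lies in $E(H)$ (either $a = b$ and the loop is present, or $a \neq b$ and the edge of the complete graph is present). Consequently, any map $f : V(G) \to V(H)$ satisfies the edge part of the homomorphism condition automatically, and $f$ is a list-homomorphism if and only if $f(v) \in L(v)$ for every $v \in V(G)$. So testing $\LHOM{H}$ in this case collapses to testing membership in the list pointwise.

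Given this, the tester is immediate: sample $\Theta(1/\epsilon)$ vertices from $V(G)$ according to $\biw$, query $f$ on each sampled vertex, and reject iff some sampled $v$ satisfies $f(v) \notin L(v)$; otherwise accept. The query complexity is $O(1/\epsilon)$, and the tester is trivially one-sided since a genuine list-homomorphism never violates a list constraint.

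For soundness, suppose $\dist_H(f) \geq \epsilon$. Define $f'(v) = f(v)$ whenever $f(v) \in L(v)$ and $f'(v) = $ any element of $L(v)$ otherwise. Then $f'$ satisfies all list constraints, so by the observation in the first paragraph $f'$ is itself a list-homomorphism. Hence $\dist(f,f') \geq \dist_H(f) \geq \epsilon$, which means the $\biw$-mass of vertices with $f(v) \notin L(v)$ is at least $\epsilon$. The probability of missing such a vertex in all $\Theta(1/\epsilon)$ samples is at most $(1-\epsilon)^{\Theta(1/\epsilon)} \le 1/3$, as required.

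Alternatively, one could invoke Lemma~\ref{lmm:list-testable} with the trivial base tester $\caA$ that always accepts (which is correct because any map satisfying the list constraints is already a list-homomorphism, so $q(\epsilon) = 0$), yielding the same bound $O(1/\epsilon) + 0 = O(1/\epsilon)$. Since the only content is the vacuity of the edge constraints on a reflexive complete graph, there is really no obstacle here; the lemma serves mainly as the base case that Lemmas~\ref{lmm:constant-upper} and the forthcoming irreflexive-complete-bipartite argument can build on.
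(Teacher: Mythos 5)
Your proof is correct and rests on the same key observation as the paper -- that for a reflexive complete graph every map satisfying the list constraints is automatically a list-homomorphism -- and your closing alternative (invoking Lemma~\ref{lmm:list-testable} with the trivial always-accept base tester) is exactly the paper's proof; your direct sampling argument merely unpacks what that lemma does internally. No gaps.
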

\begin{proof}
  Let $(G,L,\biw,f)$ be an input and assume that $f$ satisfies list constraints.
  Then, we can always accept since any map is a list-homomorphism to $H$.
  The lemma follows from Lemma~\ref{lmm:list-testable}.
\end{proof}

\begin{lemma}\label{lmm:ktw-testable}
  Let $K_2$ be an irreflexive complete graph with two vertices.
  There exists a one-sided error tester $\caA$ for $\LHOM{K_2}$ with query complexity $O(1/\epsilon^2)$.
\end{lemma}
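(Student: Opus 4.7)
By Corollary~\ref{crl:cc-testable} it suffices to design a one-sided error tester for $\LHOM{K_2}$ when the input graph $G$ is connected, paying a multiplicative overhead of $1/\epsilon$; by Lemma~\ref{lmm:list-testable}, I may further assume that the input map $f$ satisfies every list constraint, at an additive cost of $O(1/\epsilon)$. Thus it is enough to exhibit a one-sided error tester with $O(1/\epsilon)$ queries for instances $(G, L, \biw, f)$ where $G$ is connected and $f(v) \in L(v)$ for all $v \in V(G)$; combining the two reductions then yields the claimed $O(1/\epsilon^2)$ bound.

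The central structural observation is that a list-homomorphism to $K_2$ is precisely a proper $2$-coloring of $G$ that is consistent with the lists. Since we assume a list-homomorphism exists and $G$ is connected, $G$ must be bipartite and admits exactly two proper $2$-colorings $\chi$ and $\bar\chi = 1 - \chi$; both can be computed from $G$ alone by BFS, without any query to $f$. Let $\caV \subseteq \{\chi, \bar\chi\}$ be the subset of these two colorings that additionally satisfy every list constraint. The set $\caV$ is also computable without queries, is nonempty by our standing hypothesis, and is exactly the set of list-homomorphisms from $G$ to $K_2$. The tester then samples $m = \Theta(1/\epsilon)$ vertices $v_1, \ldots, v_m$ independently according to $\biw$, queries the values $f(v_1), \ldots, f(v_m)$, and accepts iff there is some $c \in \caV$ with $f(v_i) = c(v_i)$ for every $i$.

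Completeness is immediate: if $f$ is a list-homomorphism, then $f = c$ for some $c \in \caV$, and the test accepts with probability one. For soundness, suppose $\dist_H(f) \geq \epsilon$; since the list-homomorphisms are exactly $\caV$, we have $\dist(f, c) \geq \epsilon$ for every $c \in \caV$, so the probability that all $m$ samples agree with a fixed $c$ is $(1-\dist(f,c))^m \leq (1-\epsilon)^m$. A union bound over the at most two choices in $\caV$ yields accept probability at most $2(1-\epsilon)^m \leq 2 e^{-\epsilon m}$, which is at most $1/3$ for a suitable $m = \Theta(1/\epsilon)$. The one point requiring care is that acceptance must be restricted to $c \in \caV$ rather than to any $2$-coloring of $G$: otherwise, when only one of $\chi, \bar\chi$ is list-valid, samples that happen to agree with the list-invalid coloring could trigger a spurious accept. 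The assumption (coming from Lemma~\ref{lmm:list-testable}) that $f$ already satisfies the lists, combined with the restriction to $\caV$, sidesteps this issue cleanly and is where the two preliminary reductions pay off.
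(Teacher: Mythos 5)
Your proof is correct and follows essentially the same route as the paper: reduce via Corollary~\ref{crl:cc-testable} and Lemma~\ref{lmm:list-testable} to connected, list-satisfying inputs, use the fact that a connected bipartite graph has exactly two proper $2$-colorings, and sample $\Theta(1/\epsilon)$ vertices to check agreement. Your acceptance criterion (explicit comparison against the list-valid colorings in $\caV$) is a minor variant of the paper's pairwise-consistency check and, if anything, streamlines the paper's case analysis of which of the two colorings satisfies the lists.
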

\begin{proof}
  Let $(G,L,\biw,f)$ be an input and assume that $G$ is connected and $f$ satisfies list constraints.
  Note that $G$ must be bipartite in order to have a homomorphism to $H$, 
  and let $V_1 \cup V_2$ be the bipartition of $G$.
  Let $a,b$ be two vertices in $K_2$.
  Then, we have two homomorphisms, i.e., $f^1$ and $f^2$ where $f^1|_{V_1} \equiv a, f^1|_{V_2} \equiv b$,
  and $f^2|_{V_1} \equiv b, f^2|_{V_2} \equiv a$.
  
  Our algorithm $\caA$ is as follows:
  Let $S_1$ (resp., $S_2$) be a set of $\Theta(1/\epsilon)$ vertices chosen from $V_1$ (resp., $V_2$) according to $\biw|_{V_1}$ (resp., $\biw|_{V_2}$). 
  Then, we check $f(u) = f(v)$ for every $u,v \in S_1$,
  $f(u) = f(v)$ for every $u,v \in S_2$,
  and $f(u) \neq f(v)$ for every $u \in S_1,v \in S_2$.
  We reject if any of them do not hold.
  We accept otherwise.

  It is easy to see that the above test always accepts when $f$ is a list-homomorphism,
  and the query complexity is $O(1/\epsilon)$.
   
  Suppose that $\dist_H(f) \geq \epsilon$.
  Note that $f^1$ or $f^2$ must satisfy list constraints.
  We assume below that both $f^1$ and $f^2$ satisfy list constraints.
  The analysis is similar when either of them does not satisfy list constraints.
  
  Since $f$ is $\epsilon$-far from $f^1$,
  we have $\sum_{v \in V_1,f(v) = b}\biw(v)  \geq \epsilon/2 $ or $\sum_{v \in V_2,f(v) = a}\biw(v)  \geq \epsilon/2$.
  Similarly,
  we have $\sum_{v \in V_1,f(v) = a}\biw(v)  \geq \epsilon/2 $ or $\sum_{v \in V_2,f(v) = b}\biw(v)  \geq \epsilon/2$.
  In any case,
  the probability that we reject is at least $1 - 2(1-\epsilon/2)^{\Theta(1/\epsilon)} \geq 2/3$.
  
  From Corollary~\ref{crl:cc-testable} and Lemma~\ref{lmm:list-testable},
  $\LHOM{K_2}$ is testable with $O(1/\epsilon^2)$ queries.
\end{proof}

Now, we show that any complete bipartite graph is testable with a constant number of queries.
For two graphs $G$ and $H$,
we call a map $f : V(G) \to V(H)$ a \textit{full-homomorphism} if 
$(u,v) \in E(G)$ iff $(f(u),f(v)) \in E(H)$.
The difference from a homomorphism is that we must have $(f(u),f(v)) \not \in E(H)$ when $(u,v) \not \in E(G)$.
\begin{proposition}\label{prp:full-hom}
  Let $h$ be a full-homomorphism from $H$ to $H'$.
  For a graph $G$ and a homomorphism $f'$ from $G$ to $H'$,
  let $f:V(G) \to V(H)$ be a map such that $f(v)$ is any element in $h^{-1} (f'(v))$.
  Then, $f$ is also a homomorphism from $G$ to $H$.
\end{proposition}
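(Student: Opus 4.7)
The plan is a direct unfolding of definitions; the statement is essentially a one-line observation and no ingenuity is required. First I would fix an arbitrary edge $(u,v) \in E(G)$, and the goal is to show $(f(u),f(v)) \in E(H)$. Since $f'$ is a homomorphism from $G$ to $H'$, we have $(f'(u),f'(v)) \in E(H')$. By the defining condition $f(v) \in h^{-1}(f'(v))$ for every $v \in V(G)$, we have $h(f(u)) = f'(u)$ and $h(f(v)) = f'(v)$, so $(h(f(u)),h(f(v))) \in E(H')$.

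Now I invoke the fact that $h$ is a \emph{full}-homomorphism, i.e., $(x,y) \in E(H) \iff (h(x),h(y)) \in E(H')$. The backward direction of this equivalence applied to $x = f(u)$, $y = f(v)$ yields $(f(u),f(v)) \in E(H)$, which is exactly what was needed. Since the edge was arbitrary, $f$ is a homomorphism.

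There is no real obstacle here; the only point that could trip one up is remembering that the difference between a homomorphism and a full-homomorphism is the ``only if'' direction of the edge condition, and it is exactly this direction that is used. One should also note the implicit assumption that $h^{-1}(f'(v))$ is nonempty for every $v$, which is what makes the definition of $f$ sensible (in the application, $h$ will be surjective onto the image of $f'$). No case analysis, no universal-algebraic machinery, and no appeal to earlier lemmas is needed.
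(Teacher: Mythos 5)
Your proof is correct and is essentially the paper's argument: the paper proves the same fact by contradiction (if $(f(u),f(v))\notin E(H)$ then fullness of $h$ forces $(f'(u),f'(v))\notin E(H')$), which is just the contrapositive of the backward implication you use directly.
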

\begin{proof}
  Suppose that $f$ is not a homomorphism.
  Then, there exist $u,v \in V(G)$ such that $(u,v) \in E(G)$ while $(f(u),f(v)) \not \in E(H)$.
  Then, we have $(f'(u),f'(v)) = (h(f(u)),h(f(v))) \not \in E(H')$ since $h$ is a full-homomorphism,
  which is contradicting the fact that $f'$ is a homomorphism.
\end{proof}

\begin{lemma}\label{lmm:full-hom-testable}
  Let $H$ be a graph and suppose that there exists a full-homomorphism $h$ from $H$ to $H'$.
  If there exists a one-sided error tester for $\LHOM{H'}$ with query complexity $q(\epsilon)$, 
  then there exists a one-sided error tester for $\LHOM{H}$ with query complexity $O(1/\epsilon) + q(\epsilon/2)$.
\end{lemma}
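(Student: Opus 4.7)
The plan is a two-layer reduction: first eliminate the burden of checking list constraints using Lemma~\ref{lmm:list-testable}, then reduce $\LHOM{H}$ to $\LHOM{H'}$ by pushing everything forward through the full-homomorphism $h$.

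By Lemma~\ref{lmm:list-testable}, it suffices to build a one-sided error tester for $\LHOM{H}$ with query complexity $q(\epsilon)$ under the promise that the input map $f$ satisfies $f(v) \in L(v)$ for every $v \in V(G)$; the extra $O(1/\epsilon)$ term and the halving of the error parameter then appear automatically when that lemma is applied. Given such an input $(G,L,\biw,f)$ for $\LHOM{H}$, I construct an input $(G,L',\biw,f')$ for $\LHOM{H'}$ by setting $L'(v) = h(L(v)) \subseteq V(H')$ and $f'(v) = h(f(v))$. Each evaluation of $f'$ costs a single oracle query to $f$, so simulating the assumed tester $\caA$ for $\LHOM{H'}$ on this new instance with error parameter $\epsilon$ uses at most $q(\epsilon)$ queries in total; I return $\caA$'s answer.

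Completeness is immediate: if $f$ is a list-homomorphism to $H$, then $f'(v) = h(f(v)) \in h(L(v)) = L'(v)$, and because $h$ is in particular a homomorphism, $(u,v) \in E(G)$ gives $(f'(u),f'(v)) \in E(H')$, so $f'$ is a list-homomorphism to $H'$ and the one-sided tester $\caA$ accepts with probability $1$. For soundness I argue the contrapositive: if $\dist_{H'}(f') < \epsilon$ (with respect to the lists $L'$), then $\dist_H(f) < \epsilon$. Let $g': V(G) \to V(H')$ be a list-homomorphism achieving $\dist(f',g') < \epsilon$. I define $g : V(G) \to V(H)$ vertex-by-vertex by setting $g(v) = f(v)$ whenever $h(f(v)) = g'(v)$, and otherwise choosing any $g(v) \in h^{-1}(g'(v)) \cap L(v)$; this set is nonempty precisely because $g'(v) \in L'(v) = h(L(v))$. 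By Proposition~\ref{prp:full-hom}, $g$ is a homomorphism from $G$ to $H$, and by construction it respects the lists, so it is a list-homomorphism. Since $g$ disagrees with $f$ only on vertices where $g'$ disagrees with $f'$, we get $\dist(f,g) \leq \dist(f',g') < \epsilon$, hence $\dist_H(f) < \epsilon$.

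The only nontrivial step is verifying that the vertex-by-vertex preimage construction actually yields a homomorphism rather than merely a map respecting the lists; this is exactly Proposition~\ref{prp:full-hom} and is where the hypothesis that $h$ is a \emph{full}-homomorphism (not merely a homomorphism) is used. Everything else is bookkeeping: the inner tester needs $q(\epsilon)$ queries, and Lemma~\ref{lmm:list-testable} converts it into a tester for arbitrary $f$ with the claimed query complexity $O(1/\epsilon) + q(\epsilon/2)$.
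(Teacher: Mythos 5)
Your proposal is correct and follows essentially the same route as the paper: reduce to list-respecting inputs via Lemma~\ref{lmm:list-testable}, push the instance forward through $h$, and for soundness pull a near list-homomorphism $g'$ for $H'$ back to one for $H$ via $h^{-1}$ intersected with the lists, invoking Proposition~\ref{prp:full-hom}. The only cosmetic difference is that you argue soundness in contrapositive form, whereas the paper argues directly that $\dist_{H'}(f') \geq \dist_H(f) \geq \epsilon$; the construction and the use of fullness of $h$ are identical.
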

\begin{proof}
  Let $(G,L,\biw,f)$ be an input and assume that $f$ satisfies list constraints.
  We define $L' = h \circ L$ and $f' = h\circ f$.
  Then, we run the tester for $\LHOM{H'}$ on an input $(G,L',\biw,f')$ with an error parameter $\epsilon$.

  If $f$ is a list-homomorphism,
  then $f'$ is also a list-homomorphism,
  and the tester always accepts.

  Suppose that $\dist_H(f) \geq \epsilon$ and let $\overline{f'}$ be the list-homomorphism closest to $f'$.
  We define $\widetilde{f}:V(G) \to V(H)$ as follows.
  \begin{eqnarray*}
    \widetilde{f}(v) = \begin{cases}
      f(v)  & \text{if } \overline{f'}(v) = f'(v), \\
      \text{any element in } L(v) \cap h^{-1}(\overline{f'}(v)) & \text{otherwise}.\\
    \end{cases}
  \end{eqnarray*}
  Note that, in the latter case,
  $L(v) \cap h^{-1}(\overline{f'}(v))$ is not empty since $\overline{f'}(v) \in L'(v) = h(L(v))$.
  Thus, $\widetilde{f}$ is well-defined.
  We can easily see that $\widetilde{f}$ satisfies list constraints from the construction. 
  When $\overline{f'}(v) = f'(v)$,
  we have $f(v) \in h^{-1}(\overline{f'}(v))$.
  Thus, 
  $\widetilde{f}$ is a list-homomorphism from Proposition~\ref{prp:full-hom}.
  It means that $\dist(f',\overline{f'}) \geq \dist(f,\widetilde{f}) \geq \dist_H(f) \geq \epsilon$.
  Thus, the tester rejects with probability at least $2/3$.

  We have shown that $\LHOM{H}$ is testable with $q(\epsilon)$ queries when an input is restricted to satisfy list constraints.
  The lemma follows from Lemma~\ref{lmm:list-testable}.
\end{proof}

\begin{lemma}\label{lmm:icbg}
  Let $H$ be an irreflexive complete bipartite graph.
  Then, there exists a one-sided error tester for $\LHOM{H}$ with query complexity $O(1/\epsilon^2)$.
\end{lemma}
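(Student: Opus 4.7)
The plan is to reduce the general complete bipartite case to the $K_2$ case handled by Lemma~\ref{lmm:ktw-testable}, using the full-homomorphism machinery of Lemma~\ref{lmm:full-hom-testable}. Since the hard analytic work (sampling both sides of the bipartition and the triangle-inequality argument against the two candidate homomorphisms $f^1,f^2$) has already been done for $K_2$, and the reduction from a list-constraint-satisfying map to an arbitrary map is handled by Lemma~\ref{lmm:list-testable}, essentially all that remains is to exhibit the correct full-homomorphism.

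First I would write $H$ as $K_{m,n}$ with bipartition $(A,B)$, assuming $m,n \ge 1$ (the degenerate case in which $H$ has no edges is trivial: any input graph $G$ with an edge admits no list-homomorphism, so we may simply run the list-membership check from Lemma~\ref{lmm:list-testable}). Define $h:V(H)\to V(K_2)=\{a,b\}$ by $h(x)=a$ for $x\in A$ and $h(x)=b$ for $x\in B$. I would then verify that $h$ is a full-homomorphism in both directions: if $(u,v)\in E(H)$ then $u$ and $v$ lie in different parts, so $\{h(u),h(v)\}=\{a,b\}\in E(K_2)$; conversely, if $(h(u),h(v))\in E(K_2)$ then $u,v$ lie in different parts of the bipartition, and by completeness of $K_{m,n}$ we get $(u,v)\in E(H)$.

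With this full-homomorphism in hand, I invoke Lemma~\ref{lmm:ktw-testable} to obtain a one-sided error tester for $\LHOM{K_2}$ with query complexity $q(\epsilon)=O(1/\epsilon^2)$, and then apply Lemma~\ref{lmm:full-hom-testable} to transfer it to a one-sided error tester for $\LHOM{H}$ with query complexity $O(1/\epsilon)+q(\epsilon/2)=O(1/\epsilon^2)$, which is exactly the desired bound.

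I do not anticipate any significant obstacle: the proof is a straightforward composition of Lemmas~\ref{lmm:ktw-testable} and~\ref{lmm:full-hom-testable}. The only point that requires any thought is checking the ``only if'' direction in the definition of full-homomorphism, which is precisely where completeness of $K_{m,n}$ enters in an essential way (a general bipartite graph would satisfy the homomorphism direction, but could fail the full-homomorphism direction, and the reduction would then no longer work).
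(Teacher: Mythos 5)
Your proposal is correct and follows exactly the paper's argument: the paper likewise proves Lemma~\ref{lmm:icbg} by observing that the bipartition map is a full-homomorphism from $H$ to $K_2$ and then composing Lemmas~\ref{lmm:ktw-testable} and~\ref{lmm:full-hom-testable}. Your write-up merely makes explicit the verification of the full-homomorphism property (and a harmless degenerate case) that the paper leaves implicit.
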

\begin{proof}
  Since there exists a full-homomorphism from $H$ to $K_2$,
  $\LHOM{H}$ is testable with $O(1/\epsilon^2)$ queries from Lemmas~\ref{lmm:ktw-testable} and~\ref{lmm:full-hom-testable}.
\end{proof}
\begin{proof}[Proof of Lemma~\ref{lmm:constant-upper}]
  The claim immediately follows from Lemmas~\ref{lmm:rcg} and~\ref{lmm:icbg}.
\end{proof}

\section{Graphs Not Testable with a Constant Number of Queries}\label{sec:constant-lower}
In this section, we show the following lemma, 
which is the ``only if'' part of Theorem~\ref{thr:constant-list}.
\begin{lemma}\label{lmm:constant-lower}
  If $H$ is neither an irreflexive complete bipartite graph nor a reflexive complete graph, 
  testing $\LHOM{H}$ requires $\Omega(\frac{\log n}{\log \log n})$ queries.
\end{lemma}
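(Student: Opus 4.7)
The plan is to establish the lower bound via Yao's minimax principle: I would exhibit two distributions over inputs to $\LHOM{H}$, one distribution $\mathcal{D}_{\mathsf{yes}}$ supported on list-homomorphisms and a second $\mathcal{D}_{\mathsf{no}}$ supported on $\epsilon$-far maps, and argue that no deterministic tester using $q = o(\log n / \log \log n)$ queries can tell them apart with success probability $2/3$.

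First I would perform a case analysis on $H$. Since $H$ is neither a reflexive complete graph nor an irreflexive complete bipartite graph, one of the following obstructions must occur: (a) $H$ is reflexive and contains two looped vertices with no edge between them, (b) $H$ is irreflexive and non-bipartite, (c) $H$ is irreflexive bipartite but two vertices on the same side have distinct neighborhoods, or (d) $H$ has both a looped and an unlooped vertex. In each case, I would extract a small configuration in $H$ together with list constraints whose set of list-homomorphisms is nontrivial but globally constrained. The algebraic framework of $\Pol{\caH^L}$ guides this step: whenever $H$ escapes the constant-testable class, the algebra $\bbH^L$ lacks a strong polymorphism (such as a low-arity near-unanimity operation), and the Galois correspondence between polymorphisms and invariants yields a binary invariant that serves as the core of the obstruction.

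Next I would design a hardness gadget $(G_0, L_0)$ from this configuration: a constant-size instance admitting at least two list-homomorphisms $\phi_1, \phi_2$ that differ on a constant fraction of $V(G_0)$. The hard input is then a disjoint union of $\Theta(n / |V(G_0)|)$ copies of the gadget with uniform weight $\biw$. Under $\mathcal{D}_{\mathsf{yes}}$, each copy is independently assigned one of $\phi_1, \phi_2$ uniformly at random. Under $\mathcal{D}_{\mathsf{no}}$, each copy is assigned a map that interpolates between $\phi_1$ and $\phi_2$ --- $\epsilon$-far from any list-homomorphism on that copy, yet whose marginal on any small queried subset is statistically close to that under $\mathcal{D}_{\mathsf{yes}}$. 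A standard calculation bounding the total variation between the two transcript distributions (using independence across copies and the constant alphabet $V(H)$) then yields the claim.

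The main obstacle is step one: constructing a gadget that is simultaneously ``rich enough'' (so $\mathcal{D}_{\mathsf{yes}}$ has entropy that makes $\mathcal{D}_{\mathsf{no}}$ hard to catch locally) and ``rigid enough'' (so $\mathcal{D}_{\mathsf{no}}$ is truly $\epsilon$-far). The algebraic perspective via $\Pol{\caH^L}$ is what allows this to be done uniformly across all bad $H$. I note that in favorable cases one in fact obtains a much stronger bound such as $\Omega(\sqrt{n})$; the $\log n / \log \log n$ rate is what remains after accounting for the most delicate cases, where the hardness is not witnessed by a single small gadget but only by a family of gadgets of slowly growing size, forcing a correspondingly weaker lower bound via the transcript-counting argument.
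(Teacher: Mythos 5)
There is a genuine gap, and it sits exactly at the lower-bound engine. Your hard distributions are built from a disjoint union of \emph{constant-size} gadget copies, with the no-distribution assigning each copy a map that is far from every list-homomorphism of that copy but ``locally indistinguishable'' from the yes-distribution. This cannot work: in this model the tester knows $G$ and the lists $L$ explicitly (only $f$ is an oracle), so it can sample $O(1/\epsilon)$ copies and query \emph{all} $|V(G_0)|=O(1)$ vertices of each sampled copy, checking directly whether the restriction of $f$ to that copy extends to a list-homomorphism of the copy. If a constant fraction of the copies carry maps that are far on their own copy, this $O(1/\epsilon)$-query tester rejects, so the claimed property that the no-marginals are statistically close to the yes-marginals on every small queried set is unachievable, and the construction yields at best an $\Omega(1)$ bound. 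A super-constant bound such as $\Omega(\frac{\log n}{\log\log n})$ forces the far instances to have long-range structure, where violations are only witnessed by pairs of vertices correlated through long chains of constraints; your closing remark about gadget families of slowly growing size gestures at this but supplies no construction, and the preceding algebraic step (absence of a near-unanimity polymorphism yielding a ``binary invariant'') is not developed into one either.

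The paper avoids building such a distribution from scratch. It proves a purely combinatorial case analysis: a bad $H$ contains, as an induced subgraph, a looped--unlooped edge, a reflexive path of length $2$, an irreflexive path of length $3$, or a triangle (Proposition~\ref{prp:induced} transfers hardness from induced subgraphs via lists). For the first three it exhibits a two-vertex gadget whose reachability relation is $R=\{(a,c),(b,c),(b,d)\}$ with $a\neq b$, $c\neq d$ --- the implication pattern of \twSAT/monotonicity --- and then invokes Lemma~\ref{lmm:reachability-is-hard}, i.e.\ the Fischer et al.\ lower bound for monotonicity testing over general posets, as a black box; the hard instances there are global (not disjoint constant-size blocks), which is precisely what your construction lacks. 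The triangle case is handled separately by the $\Omega(n)$ bound of Lemma~\ref{lmm:triangle-is-hard} from the algebraic machinery of Section~\ref{sec:sublinear-lower}. So while your case analysis of bad $H$ roughly matches the paper's, the heart of the statement --- where $\Omega(\frac{\log n}{\log\log n})$ actually comes from --- is missing in your proposal.
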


The following is immediate since we can freely restrict the range of a map by list constraints.
\begin{proposition}\label{prp:induced}
  Let $H$ be a graph and $H'$ be an induced subgraph of $H$.
  If testing $\LHOM{H'}$ requires $q$ queries,
  then testing $\LHOM{H}$ also requires $q$ queries.
\end{proposition}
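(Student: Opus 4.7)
The plan is to prove this by a straightforward reduction: any lower bound instance for $\LHOM{H'}$ is already a valid lower bound instance for $\LHOM{H}$, because list constraints let us confine every possible list-homomorphism to land inside $V(H') \subseteq V(H)$.

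More concretely, I would take any input $(G, L', \biw, f')$ to the testing problem for $\LHOM{H'}$ and construct an input $(G, L, \biw, f)$ for $\LHOM{H}$ by setting $L(v) := L'(v) \subseteq V(H') \subseteq V(H)$ for every $v \in V(G)$, and letting $f := f'$ regarded as a map into $V(H)$. The key observation, which I would state as a short claim, is that a map $g : V(G) \to V(H)$ is a list-homomorphism to $H$ with lists $L$ if and only if it is a list-homomorphism to $H'$ with lists $L'$. The forward direction holds because $g(v) \in L(v) \subseteq V(H')$ forces the image to lie in $V(H')$, and since $H'$ is an \emph{induced} subgraph of $H$, the constraint $(g(u), g(v)) \in E(H)$ collapses to $(g(u), g(v)) \in E(H')$. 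The reverse direction is immediate since $E(H') \subseteq E(H)$ and $L'(v) \subseteq V(H)$.

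From this equivalence, the set of list-homomorphisms for the two instances is identical as a set of maps $V(G) \to V(H) \supseteq V(H')$, so $\dist_H(f) = \dist_{H'}(f')$ under the same weight function $\biw$. Hence a tester for $\LHOM{H}$, applied to $(G, L, \biw, f)$ with parameter $\epsilon$, accepts exactly when $f'$ is a list-homomorphism to $H'$ and rejects exactly when $f'$ is $\epsilon$-far from list-homomorphisms to $H'$, using the same number of oracle queries. Contrapositively, if testing $\LHOM{H'}$ requires $q$ queries, so does testing $\LHOM{H}$.

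There is essentially no hard step here; the only thing to double-check is that ``induced subgraph'' is precisely the hypothesis needed so that edges of $H$ with both endpoints in $V(H')$ are edges of $H'$ — this is what rules out spurious homomorphisms to $H$ that would not be homomorphisms to $H'$. The author's remark that the claim is ``immediate since we can freely restrict the range of a map by list constraints'' refers to exactly this reduction.
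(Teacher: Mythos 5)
Your reduction is correct and is precisely the argument the paper intends: the paper dismisses the proposition as ``immediate since we can freely restrict the range of a map by list constraints,'' and your write-up simply spells out that one-line reasoning, including the essential role of the \emph{induced} hypothesis in making the sets of list-homomorphisms (and hence the distances) coincide. No gaps; same approach, just more explicit.
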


The following lemma is implicitly stated in~\cite{fischer2002monotonicity} when showing lower bounds for testing \twSAT.
\begin{lemma}[\cite{fischer2002monotonicity}]\label{lmm:reachability-is-hard}
  Let $H$ be a graph and $G$ be a graph with list constraints $\{L(v)\}_{v \in V(G)}$.
  For vertices $u,v \in V(G)$ with $u\neq v$, 
  let $R = \{(f(u),f(v)) \mid f \text{ is a list-homomorphism from } G \text{ to }H\}$.
  If $R = \{(a,c),(b,c),(b,d)\}$ for vertices $a,b,c,d \in H, a \neq b, c \neq d$,
  then testing $\LHOM{H}$ requires $\Omega(\frac{\log n}{\log \log n})$ queries.
\end{lemma}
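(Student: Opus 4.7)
The plan is to reduce from the $\Omega(\log n / \log \log n)$ lower bound for \twSAT testing shown in~\cite{fischer2002monotonicity}. The key observation is that $R = \{(a,c),(b,c),(b,d)\}$, under the encoding $a \mapsto 0, b \mapsto 1$ on the first coordinate and $c \mapsto 1, d \mapsto 0$ on the second, is exactly the disjunction relation $\{(x,y) : x \lor y\}$. Thus the pair $(u,v)$ of vertices in $G$ acts as a macro-clause: once $f(u) \in \{a,b\}$ and $f(v) \in \{c,d\}$ are forced by lists, the existence of a list-homomorphism extending $f|_{\{u,v\}}$ is equivalent to the Boolean disjunction on the encoded values. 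Swapping the roles of $u$ and $v$ yields the transpose $R^{-1}$, giving the same disjunction in reversed coordinates, and swapping $a \leftrightarrow b$ or $c \leftrightarrow d$ in the lists implements negated literals.

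Given a hard instance $\Phi$ of \twSAT testing on $N$ variables from~\cite{fischer2002monotonicity}---which may be assumed (by standard simplifications of the implication graph) to have a bipartite variable structure---I would construct $(G_\Phi, L_\Phi, \biw_\Phi)$ by placing a single variable vertex for each variable of $\Phi$ with list $\{a,b\}$ or $\{c,d\}$ according to which side of the bipartition it lies on, and for each $2$-clause of $\Phi$, gluing in a fresh copy of $G$ between the two variable vertices involved, choosing orientation and list polarity to match the clause's literal polarities. The weight $\biw_\Phi$ places mass $\Theta(1/N)$ on each variable vertex and spreads the remaining mass over gadget-internal vertices in proportion to their count.

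A Boolean assignment $\alpha$ induces a natural $f_\alpha : V(G_\Phi) \to V(H)$ sending each variable vertex to the encoding of its value and extended arbitrarily on gadget-internal vertices. A short calculation then shows that $f_\alpha$ is $\Theta(\epsilon)$-far from a list-homomorphism iff $\alpha$ is $\Theta(\epsilon)$-far from satisfying $\Phi$: one direction uses that any list-homomorphism on $G_\Phi$ restricts on variable vertices to a satisfying assignment of $\Phi$, while the other uses that any satisfying assignment extends to a list-homomorphism on $G_\Phi$ locally within each gadget. Since each query to a variable vertex of $G_\Phi$ is simulated by a single query to $\alpha$, a tester for $\LHOM{H}$ immediately yields a \twSAT tester on $\Phi$ with the same query complexity up to constants, and the $\Omega(\log N / \log \log N)$ bound transfers.

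The main obstacle is the distance-preservation step. Because each 2-clause is replaced by a copy of $G$ containing auxiliary vertices, the graph $G_\Phi$ has size $\Theta(NK)$ where $K = |V(G)|$; a naive uniform weighting could dilute the contribution of variable vertices and degrade the $\epsilon$ parameter. I would handle this by choosing $\biw_\Phi$ so that variable vertices carry total weight $\Omega(1)$ while each gadget-internal vertex receives only $o(1/N)$ weight. Since the error parameter is defined with respect to $\biw_\Phi$, the $\epsilon$ used when testing $\LHOM{H}$ corresponds to a $\Theta(\epsilon)$ for the simulated \twSAT tester, preserving the lower bound up to constants and yielding the desired $\Omega(\log n / \log \log n)$ query complexity.
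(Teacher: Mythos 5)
The paper itself does not prove this lemma at all---it is cited as being implicit in~\cite{fischer2002monotonicity}---so the relevant question is whether your reduction is sound on its own. Its skeleton is indeed the natural way to make the implicit claim explicit: replace each binary constraint of a hard instance by a fresh copy of the gadget $(G,L)$ identified at $u,v$, concentrate the weight on the variable vertices, argue distance preservation via the fact that every list-homomorphism restricts to a satisfying assignment on the variable vertices and every satisfying assignment extends gadget-by-gadget (by the definition of $R$), and simulate each query with $O(1)$ queries to the assignment. The weighting discussion and the query-simulation step are fine (modulo the minor point that a query to a gadget-internal vertex needs \emph{two} queries to $\alpha$, and that $f_\alpha$ must be extended \emph{canonically} inside each gadget, not arbitrarily, so that satisfying assignments map to genuine list-homomorphisms).

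The genuine gap is in the polarity/bipartiteness step. The gadget realizes exactly the fixed relation $R=\{(a,c),(b,c),(b,d)\}$; the only freedom you have is transposing the roles of $u$ and $v$. ``Swapping $a\leftrightarrow b$ or $c\leftrightarrow d$ in the lists'' is not available---that would require an automorphism of $H$, which you do not have---so negated literals cannot be implemented this way. Consequently the only constraints you can impose are of one type: between a vertex typed $\{a,b\}$ and a vertex typed $\{c,d\}$, forbid the single pair $(a,d)$, i.e.\ a one-polarity implication in a fixed bipartite orientation. This language cannot express same-side clauses, other literal polarities, or even equality between an $\{a,b\}$-vertex and a $\{c,d\}$-vertex (so you cannot duplicate a variable across sides), which means a \emph{generic} hard \twSAT instance cannot ``be assumed, by standard simplifications of the implication graph, to have a bipartite variable structure''---that assumption is doing all the work and is unsupported. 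The correct way to close the gap is to reduce not from generic \twSAT but from the actual hard instances of~\cite{fischer2002monotonicity}: these are monotonicity instances for Boolean functions over two-layer (bipartite) posets, where every constraint is precisely the one-forbidden-corner implication between the two layers and every variable sits on a fixed side. For those instances your gadget substitution, weighting, and distance argument go through verbatim and yield the claimed $\Omega(\frac{\log n}{\log\log n})$ bound; without anchoring the reduction to that specific structure, the proof as written does not go through.
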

We also use the following lemma, which is a special case of Lemma~\ref{lmm:sublinear-lower} (see Section~\ref{sec:sublinear-lower}).
\begin{lemma}\label{lmm:triangle-is-hard}
  Let $K_3$ be an irreflexive complete graph with three vertices.
  Then, testing $\LHOM{K_3}$ requires $\Omega(n)$ queries.
\end{lemma}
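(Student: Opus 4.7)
The statement is a special case of the general sublinear lower bound (Lemma~\ref{lmm:sublinear-lower}) proved in Section~\ref{sec:sublinear-lower}, so the cleanest route is simply to observe that the irreflexive triangle $K_3$ is non-bipartite and therefore not a bi-arc graph (the classification recalled in the preliminaries forces an irreflexive bi-arc graph to be bipartite, see~\cite{feder2003bi}), and then invoke that lemma. For a self-contained direct argument I would instead reduce from the $\Omega(n)$-query lower bound of~\cite{ben2006some} for testing satisfying assignments of $\thSAT$.

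Write $V(K_3) = \{t, f, b\}$. Starting from a bounded-occurrence $\thSAT$ instance $\phi$ on $n$ variables and $m = O(n)$ clauses, I would build the graph $G_\phi$ with list constraints via the standard Garey--Johnson encoding of $\thSAT$ into list $3$-coloring: a vertex $v_i$ with $L(v_i) = \{t, f\}$ for each variable $x_i$; a single base vertex $B$ with $L(B) = \{b\}$ adjacent to every $v_i$ to force $v_i \in \{t, f\}$; and for each clause a constant-size OR-gadget whose auxiliaries carry list $\{t, f, b\}$ and which admits a list $3$-coloring exactly when the clause is satisfied. Given a candidate assignment $\sigma$, I would set $f_\sigma(v_i) \in \{t, f\}$ according to $\sigma(x_i)$ and fix $f_\sigma$ on each auxiliary vertex to a canonical value in its list. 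Choose the weight function $\biw$ to place a $1 - o(1)$ fraction of its mass uniformly on the variable vertices and negligible mass on the auxiliaries. Any $q$-query tester for $\LHOM{K_3}$ on $(G_\phi, L, \biw, f_\sigma)$ then simulates a $q$-query tester for $\thSAT$: queries at variable vertices are forwarded to $\sigma$, while queries at auxiliary vertices are answered from a precomputed table of canonical values. Since $G_\phi$ has $\Theta(n)$ vertices, transferring the $\Omega(n)$ bound of~\cite{ben2006some} yields the claim.

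The step I expect to be the main obstacle is the distance-preservation half of the reduction, namely showing that if $\sigma$ is $\epsilon$-far from satisfying $\phi$ then $f_\sigma$ is $\Omega(\epsilon)$-far from every list-homomorphism. Concentrating $\biw$ on the variable vertices lets me identify $\dist_{K_3}(f_\sigma)$ with (essentially) the weighted fraction of $v_i$ that must be flipped, but one still has to argue that no small set of flips can rescue many unsatisfied clauses at once. This is where the bounded-occurrence assumption on $\phi$ is used: each variable appears in $O(1)$ clause gadgets, so flipping any single $v_i$ can repair only $O(1)$ previously-unsatisfied clauses, and therefore $\Omega(\epsilon n)$ flips are needed to turn $f_\sigma$ into a valid list-homomorphism. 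A mild variant may require replicating each $v_i$ a constant number of times to balance the weighted cost of a variable flip against the cost of reassigning the constantly-many auxiliaries inside a single gadget.
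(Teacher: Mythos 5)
Your first route is precisely the paper's proof: the paper offers no separate argument for this lemma, it just observes that $K_3$ is not a bi-arc graph (an irreflexive bi-arc graph must be bipartite) and declares the statement a special case of Lemma~\ref{lmm:sublinear-lower}; since the Section~\ref{sec:sublinear-lower} proof of that lemma (algebraic reductions down to $\mathsf{3LIN}$, Lemma~\ref{lmm:lin-hard}) nowhere uses Lemma~\ref{lmm:triangle-is-hard}, there is no circularity, and that one observation already constitutes a complete proof.

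Your self-contained alternative, however, has a genuine gap in the yes-case. You fix the auxiliary vertices of each OR-gadget to \emph{canonical values independent of $\sigma$} and answer queries to them from a precomputed table. But in the Garey--Johnson gadget the proper coloring of the auxiliaries depends on which literals of the clause are true, so for a satisfying $\sigma$ the map $f_\sigma$ is in general \emph{not} a list-homomorphism, only close to one (or at distance $0$ if the auxiliaries get negligible weight). A tester for $\LHOM{K_3}$ is only required to accept maps that \emph{are} list-homomorphisms; it has no tolerance guarantee for maps that are merely close, so completeness of the simulation fails as written. The repair is easy: define $f_\sigma$ on the auxiliaries of a clause gadget as the canonical proper extension determined by $\sigma$'s values on that clause's (at most three) variables, and answer an auxiliary query with $O(1)$ queries to $\sigma$; this costs only a constant factor in query complexity. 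Two further remarks: with the source problem being assignment testing (farness is Hamming distance to the nearest satisfying assignment, as in~\cite{ben2006some}), farness transfers directly once the weight is concentrated on variable vertices, because every list-homomorphism of $G_\phi$ restricts to a satisfying assignment there --- so your bounded-occurrence ``each flip repairs $O(1)$ clauses'' argument is addressing a different (clause-based) notion of distance and is not needed; and the hard instances of~\cite{ben2006some} used in the paper are $\mathsf{3LIN}$ systems rather than \thSAT, which you would encode as four clauses per equation on the same variable set before applying your gadget construction.
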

Note that $K_3$ is not a bi-arc graph.

The following two lemmas deal with reflexive graphs and irreflexive graphs, respectively.
\begin{lemma}\label{lmm:reflexive-lower}
  Let $H$ be a reflexive graph.
  If $H$ is not a complete graph, 
  testing $\LHOM{H}$ requires $\Omega(\frac{\log n}{\log \log n})$ queries.
\end{lemma}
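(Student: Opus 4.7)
Since $H$ is reflexive and non-complete, there exist distinct non-adjacent vertices $a, b \in V(H)$. The plan is to split into two cases based on whether $H$ contains an induced $P_3$, and in the first case apply Lemma \ref{lmm:reachability-is-hard} via a simple one-edge gadget; the second case will require a separate direct argument.

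Suppose first that $a$ and $b$ admit a common neighbor $c \in V(H)$; by non-adjacency of $a, b$ we have $c \neq a$ and $c \neq b$. Take $G$ to be the single edge $u$--$v$ with list constraints $L(u) = \{a, b\}$ and $L(v) = \{c, b\}$. Every pair $(f(u), f(v))$ achievable by a list-homomorphism $f$ must lie in $L(u) \times L(v)$ and be an edge of $H[\{a,b,c\}]$. The relevant edges are $(a, c)$ and $(b, c)$ (since $c$ is a common neighbor) together with the loop $(b, b)$ (reflexivity); the pair $(a, b)$ is excluded by non-adjacency. Hence the feasible set is exactly $R = \{(a, c), (b, c), (b, b)\}$, which fits the hypothesis of Lemma \ref{lmm:reachability-is-hard} with $a \neq b$ and $c \neq b$. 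That lemma gives $\Omega(\log n / \log \log n)$ queries for testing $\LHOM{H[\{a, b, c\}]}$, and Proposition \ref{prp:induced} transfers the bound to $\LHOM{H}$.

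Otherwise no non-adjacent pair has a common neighbor, i.e.\ $H$ has no induced $P_3$, which forces $H$ to be a disjoint union of reflexive cliques; since $H$ is not complete, at least two cliques are present. Picking $a, b$ from different cliques, the induced subgraph $H[\{a, b\}]$ is two disjoint reflexive singletons, so every list-homomorphism on an input graph $G$ with lists in $\{a,b\}$ must be constant on each connected component of $G$, with value in $\{a, b\}$. I would establish the lower bound directly for this restricted $\LHOM{H[\{a,b\}]}$ via a Yao-style distributional argument: let $G$ consist of many disjoint paths with $L(v) = \{a, b\}$ everywhere; the ``yes'' distribution assigns each path an independent uniformly random constant, while the ``no'' distribution places a random cut in each path, yielding inputs that are $\Theta(1)$-far from any list-homomorphism. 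A careful bound on the total-variation distance between the query-response distributions of the two ensembles gives the required query lower bound, and Proposition \ref{prp:induced} lifts it to $\LHOM{H}$.

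The main obstacle is precisely this second case: Lemma \ref{lmm:reachability-is-hard} cannot be invoked when the only accessible induced subgraph is $\{a\} \cup \{b\}$ (two reflexive singletons), because the feasible relation $R$ between any two vertices $u, v$ of any input graph is always either the diagonal $\{(a,a), (b,b)\}$ when $u$ and $v$ lie in the same component of $G$, or a rectangular product $(L(u) \cap \{a, b\}) \times (L(v) \cap \{a, b\})$ when they lie in different components—never the three-element staircase the lemma demands. Carrying out the direct combinatorial lower bound of $\Omega(\log n / \log \log n)$ for this ``cluster graph'' case, in a manner analogous to the argument of \cite{fischer2002monotonicity} underlying Lemma \ref{lmm:reachability-is-hard}, is the delicate part of the proof.
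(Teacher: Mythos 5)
Your Case~1 is precisely the paper's proof: the one-edge gadget with lists $\{a,b\}$ and $\{b,c\}$ over an induced reflexive path (a non-adjacent pair with a common neighbor) gives $R=\{(a,c),(b,c),(b,b)\}$, Lemma~\ref{lmm:reachability-is-hard} yields the $\Omega(\frac{\log n}{\log\log n})$ bound, and Proposition~\ref{prp:induced} lifts it to $H$. The paper stops there, asserting that otherwise $H$ has diameter $1$ and is therefore complete; in other words it tacitly treats $H$ as connected and never enters your Case~2. You are right that Lemma~\ref{lmm:reachability-is-hard} is inapplicable when $H$ is a disjoint union of reflexive cliques, since the feasible relation $R$ is then always a sub-diagonal or a rectangle.

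The genuine gap is that your Case~2 is only a plan, and the plan cannot succeed. Your two distributions (disjoint paths with lists $\{a,b\}$; yes $=$ a random constant per path, no $=$ a random cut per path) are distinguished with $O(1)$ queries: sample one path and two random vertices in it; under the no-distribution they straddle the cut with constant probability, and two distinct values inside one connected component is a legal one-sided rejection, while the yes-distribution never produces it. In fact no hard distribution exists in this case, because when $H$ is a disjoint union of at least two reflexive cliques $K_1,\dots,K_k$, $\LHOM{H}$ is testable with $O(1/\epsilon^2)$ queries: by Corollary~\ref{crl:cc-testable} and Lemma~\ref{lmm:list-testable} one may assume $G$ is connected and $f$ respects lists; every list-homomorphism maps the whole component into a single clique $K_i$, so $\dist_H(f)$ equals the minimum over admissible $i$ of the weight of $\{v : f(v)\notin K_i\cap L(v)\}$, and $\Theta(1/\epsilon)$ samples produce, for each of the constantly many admissible $i$, a vertex with $f(v)\notin K_i\cap L(v)$, which together form a non-extendable partial assignment and hence a valid one-sided rejection. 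So the obstacle you flagged is not merely delicate: the lemma's conclusion is false for disconnected reflexive $H$, and both the statement and the paper's proof should be read with $H$ connected. Under that reading your Case~1 already completes the proof, since a connected, reflexive, non-complete graph contains two vertices at distance exactly two, i.e., a non-adjacent pair with a common neighbor.
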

\begin{proof}
  Let $P = (\{a,b,c\};\{(a,a),(b,b),(c,c),(a,b),(b,c)\})$ be a reflexive path of length $2$ and $G = (\{u,v\}; \{(u,v)\})$ be an irreflexive edge with list constraints $L(u) = \{a,b\} $ and $L(v) = \{b,c\}$.
  It is easy to see that the relation $R$ in Lemma~\ref{lmm:reachability-is-hard} becomes $R = \{(a,b),(b,b),(b,c)\}$.
  It follows that testing $\LHOM{P}$ requires $\Omega(\frac{\log n}{\log \log n})$ queries.
  From Proposition~\ref{prp:induced},
  if $\LHOM{H}$ is testable with a constant number of queries,
  $H$ must have a diameter $1$,
  implying that $H$ is a reflexive complete graph.
\end{proof}

\begin{lemma}\label{lmm:irreflexive-lower}
  Let $H$ be an irreflexive graph.
  If $H$ is not a complete bipartite graph,
  testing $\LHOM{H}$ requires $\Omega(\frac{\log n}{\log \log n})$ queries.
\end{lemma}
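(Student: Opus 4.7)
The plan is to find, for any irreflexive $H$ that is not a complete bipartite graph, an induced subgraph on which one of our existing hardness results applies; concretely, I aim to exhibit an induced $K_3$ or an induced $P_4$ inside $H$. The lower bound for $\LHOM{H}$ then follows by Proposition~\ref{prp:induced}. The gadget construction for the $P_4$ case is essentially forced by the structure of Lemma~\ref{lmm:reachability-is-hard}; the main substantive work is the structural dichotomy.

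If $H$ contains an induced $K_3$, then Lemma~\ref{lmm:triangle-is-hard} combined with Proposition~\ref{prp:induced} already yields an $\Omega(n)$ lower bound, which is stronger than required. Suppose instead $H$ contains an induced $P_4$ on four distinct vertices $a, x, y, b$, so that the edges among them are exactly $(a,x), (x,y), (y,b)$. Take the gadget $G$ to be a single edge $(u, v)$ with list constraints $L(u) = \{a, y\}$ and $L(v) = \{x, b\}$. Checking which of the four candidate pairs in $L(u) \times L(v)$ are edges of $H$: the pairs $(a, x)$, $(y, x)$, and $(y, b)$ are edges of $H$, while $(a, b) \notin E(H)$ because the $P_4$ is induced. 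Hence the reachable pair set is $R = \{(a,x), (y,x), (y,b)\}$, which matches the bad pattern of Lemma~\ref{lmm:reachability-is-hard} with $a \neq y$ and $x \neq b$ (by distinctness of the four $P_4$-vertices). Applying Lemma~\ref{lmm:reachability-is-hard} and Proposition~\ref{prp:induced} then gives the required $\Omega(\log n / \log \log n)$ lower bound.

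It remains to verify the structural claim. If $H$ is non-bipartite, take a shortest odd cycle in $H$; it is induced, since any chord would split it into a strictly shorter odd cycle together with an even path, contradicting minimality. Length $3$ yields an induced $K_3$, and length $2k+1 \geq 5$ yields an induced $C_{2k+1}$, whose first four consecutive vertices form an induced $P_4$ (the three non-consecutive pairs among them are non-edges in $C_{2k+1}$, hence in $H$). If $H$ is connected bipartite but not complete bipartite, pick $a, b$ in opposite bipartition classes with $(a, b) \notin E(H)$; the shortest $a$-$b$ path has odd length at least $3$, and its first four vertices form an induced $P_4$, because bipartiteness kills the same-class non-consecutive pairs and shortest-path minimality kills the remaining opposite-class chord.

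The principal technical obstacle is the leftover case of disconnected bipartite $H$ that contains neither an induced $P_4$ nor an induced $K_3$ (the prototypical example being $H = 2K_2$). Such an $H$ is necessarily a disjoint union of complete bipartite components (a bipartite $P_4$-free graph), and the gadget of Lemma~\ref{lmm:reachability-is-hard} cannot be constructed inside it, because connected list-homomorphism gadgets force same-component images and disconnected ones give only product reachability relations. I treat this case by reading the hypothesis ``$H$ is not a complete bipartite graph'' in the broad sense that also excludes disjoint unions of complete bipartite graphs, which is consistent with the constant-query classification in the upper-bound section; the induced-$K_3$-or-$P_4$ analysis above then covers all the remaining situations.
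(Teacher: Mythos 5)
Your proof follows the same route as the paper's: the induced-subgraph reduction (Proposition~\ref{prp:induced}), the triangle bound (Lemma~\ref{lmm:triangle-is-hard}) when $H$ contains an induced $K_3$, and, for an induced $P_4$ on $a,x,y,b$, exactly the paper's gadget (a single edge with lists $\{a,y\}$ and $\{x,b\}$), whose reachability relation you compute correctly so that Lemma~\ref{lmm:reachability-is-hard} applies. Where you genuinely differ is that you prove the structural dichotomy that the paper dispatches in one sentence (``thus $H$ must be a complete bipartite graph''), and in doing so you isolate the one configuration where that sentence is false: a disconnected bipartite $H$ whose components are all complete bipartite, e.g.\ $H=2K_2$, which is triangle-free and $P_4$-free yet not complete bipartite. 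Your suspicion that the fence relation of Lemma~\ref{lmm:reachability-is-hard} is unrealizable there is sound, and in fact the stated lower bound cannot hold for such $H$: on a connected $G$ every list-homomorphism to $2K_2$ maps into a single component of $H$ and is determined by the image of one vertex, so there are at most four candidate homomorphisms, and a $K_2$-style sampling tester combined with Corollary~\ref{crl:cc-testable} tests $\LHOM{2K_2}$ with $O(1/\epsilon^2)$ queries. So the lemma, and the paper's own proof of it, implicitly assume $H$ is connected; your choice to read the hypothesis as also excluding disjoint unions of complete bipartite graphs is not a defect of your argument but an explicit statement of the caveat the paper leaves silent, and apart from that your proof and the paper's coincide.
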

\begin{proof}
  Let $P=(\{a,b,c,d\}; \{(a,b),(b,c),(c,d)\})$ be an irreflexive path of length $3$ and $G = (\{u,v\}; \{(u,v)\})$ be an irreflexive edge with list constraints $L(u) = \{a,c\}$ and $L(v) = \{b,d\}$.
  It is easy to see that the relation $R$ in Lemma~\ref{lmm:reachability-is-hard} becomes $R = \{(a,b),(c,b),(c,d)\}$.
  It follows that testing $\LHOM{P}$ requires $\Omega(\frac{\log n}{\log \log n})$ queries.
  Also, for an irreflexive triangle $T$,
  testing $\LHOM{T}$ requires $\Omega(n)$ queries from Lemma~\ref{lmm:triangle-is-hard}.

  Thus, if $\LHOM{H}$ is testable with a constant number of queries,
  $H$ must not have a path of length $3$ or a triangle as induced subgraphs from Proposition~\ref{prp:induced}.
  Thus, $H$ must be a complete bipartite graph.
\end{proof}

\begin{proof}[Proof of Lemma~\ref{lmm:constant-lower}]
  Let $P = (\{a,b\}; \{(a,b),(b,b)\})$ be an edge with a loop and $G = (\{u,v\}; \{(u,v)\})$ be an irreflexive edge with list constraints $L(u) = L(v) = \{a,b\}$.
  Then, the relation $R$ in Lemma~\ref{lmm:reachability-is-hard} becomes $\{(a,b),(b,b),(b,a)\}$,
  and it follows that testing $\LHOM{P}$ requires $\Omega(\frac{\log n}{\log \log n})$ queries.
  Thus, if $\LHOM{H}$ is testable with a constant number of queries,
  $H$ must be reflexive or irreflexive from Proposition~\ref{prp:induced}.
  Then, the lemma follows from Lemmas~\ref{lmm:reflexive-lower} and~\ref{lmm:irreflexive-lower}.
\end{proof}

\section{Graphs Testable with a Sublinear Number of Queries}\label{sec:sublinear-upper}
In this section, we show the following lemma,
which is the ``if'' part of Theorem~\ref{thr:sublinear-list}.
\begin{lemma}\label{lmm:sublinear-upper}
  Let $H$ be a bi-arc graph.
  Then, there exists a one-sided error tester for $\LHOM{H}$ with query complexity $O(\sqrt{n/\epsilon})$. 
\end{lemma}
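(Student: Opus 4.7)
The plan is to reduce $\LHOM{H}$ for bi-arc $H$ to a binary, propagation-based CSP by exploiting the algebraic structure of $\bbH^L$, and then emulate the Fischer et al.\ \twSAT tester in this reduction.

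First I would apply Corollary~\ref{crl:cc-testable} and Lemma~\ref{lmm:list-testable} to reduce to inputs $(G,L,\biw,f)$ where $G$ is connected and $f$ already respects every list constraint; this costs only an additive $O(1/\epsilon)$ in query complexity and a halving of $\epsilon$, so it suffices to design an $O(\sqrt{n/\epsilon})$-query one-sided tester in this restricted setting.

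Next I would use the characterization~\cite{feder2003bi} that $H$ is a bi-arc graph iff the algebra $\bbH^L$ admits a (conservative) majority polymorphism $m$. Standard results on algebras with majority terms then yield \ttmin for $\LHOM{H}$: after enforcing $(2,3)$-consistency on the binary and ternary projections, one obtains a refined family of binary relations $R_{uv}\subseteq L(u)\times L(v)$, one for each pair of variables, with the property that every surviving pair $(a,b)\in R_{uv}$ extends to a global list-homomorphism. Thus $f$ is a list-homomorphism iff $(f(u),f(v))\in R_{uv}$ for all pairs, so distance from $\LHOM{H}$ is controlled by how badly $f$ violates these binary constraints.

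The tester itself would be a direct analogue of the $\widetilde{O}(\sqrt{n})$ \twSAT tester of~\cite{fischer2002monotonicity}: sample a set $S\subseteq V(G)$ of $\Theta(\sqrt{n/\epsilon})$ vertices according to $\biw$, and from each $v\in S$ perform a bounded-length walk of depth $O(\sqrt{n/\epsilon})$ through the refined binary relations $R_{uv}$, querying $f$ at every visited vertex; reject if some queried pair $(f(u),f(v))$ lies outside the corresponding $R_{uv}$, and accept otherwise. Completeness follows immediately, since any list-homomorphism satisfies every $R_{uv}$, giving the one-sided guarantee. For soundness I would argue that $\epsilon$-farness forces $\Omega(\epsilon n)$ vertices whose value must be altered in every repair, and then combine partial repairs via $m$ to deduce that between an $\Omega(\epsilon)$-fraction of pairs $(u,v)$ there is an $R$-derivation of bounded length certifying a violation; a birthday-style argument then says the sample hits such a certifying pair with constant probability.

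The main obstacle is the quantitative soundness step, that is, translating ``$f$ is $\epsilon$-far'' into ``many pairs of vertices admit short inconsistency derivations''. Here the majority polymorphism $m$ is essential: given three partial list-homomorphisms that disagree with $f$ at few places, applying $m$ coordinatewise produces a single partial homomorphism disagreeing only where at least two of the three did, which is the tool that lets one bound the propagation depth needed to locally certify unsatisfiability. Making this length $O(\sqrt{n/\epsilon})$, and then balancing walk depth against sample size in the birthday calculation, is where the $O(\sqrt{n/\epsilon})$ bound has to be squeezed out.
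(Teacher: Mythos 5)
Your high-level ingredients match the paper's: bi-arc $H$ gives a majority polymorphism of $\bbH^L$, \ttmin then yields binary relations whose pairwise satisfaction characterizes list-homomorphisms, and a birthday-paradox sample detects violations. But there are two genuine gaps. First, the tester you describe does not meet the claimed query bound: sampling $\Theta(\sqrt{n/\epsilon})$ starting vertices and, from each, walking to depth $O(\sqrt{n/\epsilon})$ while ``querying $f$ at every visited vertex'' costs $\Theta(n/\epsilon)$ queries, i.e.\ linear. The walks are unnecessary: the relations $\caS_{\{u\}},\caS_{\{u,v\}}$ produced by \ttmin are computed from $(G,L)$ alone, with no queries to $f$, so the tester only needs to query a $\Theta(1/\epsilon)$ sample (checking $f(v)\in\caS_{\{v\}}$) and two samples $Y_1,Y_2$ of size $\Theta(\sqrt{n/\epsilon})$ (checking $(f(v),f(u))\in\caS_{\{v,u\}}$ for all $v\in Y_1,u\in Y_2$); no bounded-length derivations through the constraints, and hence no depth-versus-sample-size balancing, ever enter the picture.

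Second, the soundness step you flag as ``the main obstacle'' is indeed missing, and the route you sketch would not work as stated: it is false in general that $\epsilon$-farness forces an $\Omega(\epsilon)$-fraction of \emph{pairs} to be violating --- the violating pairs may form a sparse matching of total weight $\Theta(\epsilon)$. The paper's argument instead uses the 2-Helly property (a consequence of the majority operation, via strict width): any partial map that is not extendable to a list-homomorphism already contains a violating singleton or a violating pair with respect to the \ttmin relations. Taking $U$ of maximum weight with $f|_U$ extendable, every $v\notin U$ is either a violating vertex (caught by the $\Theta(1/\epsilon)$ sample) or forms a violating pair with some $u\in U$; and an exchange argument from the maximality of $U$ gives the Hall-type inequality $\biw(B')\le\biw(N(B'))$ for every set $B'$ of the latter vertices. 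This expansion of the violation structure, not a count of violating pairs or a bound on derivation length, is what makes two independent $\Theta(\sqrt{n/\epsilon})$ samples suffice. (Your preliminary reductions to connected $G$ and list-respecting $f$, and the claim that every surviving pair in $\caS_{\{u,v\}}$ extends globally, are fine but not needed in this form; the extension property for \emph{partial maps} is the version actually used.)
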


We first describe a propagation algorithm to solve \textsf{CSP}s.
Let $\caA$ and $\caB$ be two relational structures.
To check whether there exists a homomorphism $f$ from $\caA$ to $\caB$,
we can use the following algorithm.
Let $k,\ell$ be integers with $k \leq \ell$.
For each subset $U \subseteq A, |U| \leq \ell$,
we keep track of a set $\caS_{U}$ of tuples corresponding to maps from $A|_U$ to $B$.
First, we initialize $\caS_U$ to the set of solutions to the partial instance $\caA|_{U}$.
Then, for each subsets $U,U' \subseteq A $ with $ |U \cap U'| \leq k$,
we eliminate tuples $\bfa$ in $\caS_U$ if $\bfa|_{U'}$ is not contained in $\caS_{U'}|_{U}$.
We continue this process until no update occurs.
This propagation algorithm is called \klmin~\cite{bulatov2006combinatorial}.
If $\caS_U$ becomes empty for some $U \subseteq A$,
we can conclude that $\caA$ has no homomorphism to $\caB$.
Even if no $\caS_U$ is empty when propagation stops,
$\caA$ may not have a homomorphism to $\caB$.
If $\caA$ has a homomorphism to $\caB$ in such a case,
then $\caB$ is called having \textit{width $(k,\ell)$}.

A ternary operation $f: B^3 \to B$ is called a majority if $f(x,x,y) = f(x,y,x) = f(y,x,x) = x$ for $x,y \in B$.
It is known that a relational structure $\caB$ such that the associated algebra $\Alg{\caB}$ admits a majority operation has \textit{width $(2,3)$}~\cite{jeavons1997closure}.
We say a map $f:A \to B \cup \{\bot\}$ is \textit{extendable} to a homomorphism if there exists a homomorphism $f':A\to B$ such that $f'(v) = f(v)$ whenever $f(v) \in B$.
We call a vertex $v$ a \textit{violating vertex} if $f(v) \not \in \caS_{\{v\}}$ and a pair of vertices $(v,u)$ a \textit{violating pair} if $(f(v),f(u)) \not \in \caS_{\{v,u\}}$.
It is known that the majority operation implies the following property.
\begin{lemma}[\textit{2-Helly property}, \cite{feder1998computational}]\label{lmm:strict-width}
  Let $\caB$ be a relational structure such that $\Alg{\caB}$ admits a majority operation.
  For a relational structure $\caA$ and $U \subseteq A, |U| \leq 3$,
  let $\caS_U$ be the set of tuples obtained by \ttmin running on $\caA$.
  If a map $f: A \to B \cup \{\bot\}$ is not extendable to a homomorphism,
  then there is a violating vertex $v$ or a violating pair $(v,u)$ for some $u,v \in f^{-1}(B)$.
\end{lemma}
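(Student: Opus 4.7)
The plan is to extend $f$ one vertex at a time while maintaining that no violating vertex or pair appears among the currently assigned variables. Write $D = f^{-1}(B)$, and assume by hypothesis that no violating vertex or pair exists. The core claim is: for any $v \in A \setminus D$, there exists $b \in B$ such that augmenting $f$ by $f(v) := b$ still exhibits no violating vertex or pair on $D \cup \{v\}$. Iterating this step drives $D$ up to $A$, producing a total map $f'$; one then verifies that $f'$ is a genuine homomorphism using Baker--Pixley: relations of $\bbB$ invariant under a majority term operation are determined by their binary projections, so the pairwise consistency maintained by the sets $\caS_{\{u,v\}}$ forces consistency with every relation of $\caA$.

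For each $u \in D$, define the unary relation
\[
T^u \;=\; \{\, b \in B : (b, f(u)) \in \caS_{\{v,u\}} \,\}.
\]
Since $\caS_{\{v,u\}}$ is an invariant of the majority term operation of $\bbB$, the set $T^u$ inherits closure under the majority (as a projection of a majority-closed relation fibered over the majority-closed singleton $\{f(u)\}$). Moreover $T^u \neq \emptyset$: by $(2,3)$-minimality $\caS_{\{v,u\}}$ projects onto $\caS_{\{u\}}$, and the no-violating-vertex hypothesis gives $f(u) \in \caS_{\{u\}}$. Any element $b$ of $\caS_{\{v\}} \cap \bigcap_{u \in D} T^u$ is a valid extension, so the task reduces to showing this intersection is non-empty. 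Pairwise intersections are immediate from $(2,3)$-minimality: for $u_1, u_2 \in D$, the set $\caS_{\{v, u_1, u_2\}}$ projects onto $\caS_{\{u_1, u_2\}}$, which contains $(f(u_1), f(u_2))$ by the no-violating-pair hypothesis, so some $b$ satisfies $(b, f(u_1), f(u_2)) \in \caS_{\{v, u_1, u_2\}}$ and hence lies in $T^{u_1} \cap T^{u_2}$.

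Lifting pairwise intersection to global intersection uses the Helly-type property granted by the majority polymorphism: the ``join'' $\caS_{\{v, u_1\}} \bowtie \cdots \bowtie \caS_{\{v, u_k\}}$ over the coordinate $v$ is itself closed under the majority, so by Baker--Pixley it equals the intersection of its binary projections; pairwise compatibility of the values $f(u_i)$ is then exactly the data needed to certify a common $b$. The main obstacle I anticipate is precisely this transition from pairwise to global consistency: a naive coordinate-wise application of the majority to three pairwise witnesses $a_{12}, a_{13}, a_{23}$ does not obviously land in $T^{u_1} \cap T^{u_2} \cap T^{u_3}$, since a witness for one pair need not satisfy the third unary constraint. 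The Baker--Pixley characterization of majority-invariant relations as intersections of their binary projections resolves this at both stages---selecting the extending value and certifying the final map---and is the algebraic content underlying ``width $(2,3)$'' in this setting.
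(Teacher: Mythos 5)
The paper itself offers no proof of this lemma --- it is imported verbatim from \cite{feder1998computational} --- so the only meaningful comparison is with the standard Feder--Vardi/Baker--Pixley argument, and your sketch is essentially that argument and is essentially sound. Two remarks. You are right to distrust a bare Helly principle for the fibers $T^u$: pairwise-intersecting majority-closed unary sets need not share a point (every two-element subset of $B$ is closed under \emph{any} majority operation, so $\{1,2\},\{1,3\},\{2,3\}$ already gives a counterexample), so the lift from pairwise to global really must go through the join and $2$-decomposability, exactly as you propose. The one step you compress is the verification that $(f(u_i),f(u_j))$ actually lies in the binary projection of $J=\caS_{\{v,u_1\}}\bowtie\cdots\bowtie\caS_{\{v,u_k\}}$: the witness $b$ extracted from $\caS_{\{v,u_i,u_j\}}$ must be completed to a \emph{full} tuple of $J$, i.e.\ for every other index $l$ one needs some $b_l$ with $(b,b_l)\in\caS_{\{v,u_l\}}$. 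This does follow from the \ttmin fixpoint, since each $\caS_{\{v,u_l\}}$ projects onto $\caS_{\{v\}}$ and $b\in\caS_{\{v\}}$; applying Baker--Pixley $2$-decomposability to the (majority-closed) projection of $J$ onto the coordinates $u_1,\dots,u_k$ then produces the common value $b$, so the extension step closes, and the chosen $b$ creates no new violating vertex or pair, which drives the induction. A final caveat: your concluding step (``pairwise consistency forces consistency with every relation of $\caA$'') is immediate in the paper's application, where $\caH^L$ has only unary and binary relations; for constraints of arity greater than $3$ it additionally requires that the sets $\caS_{\{u,v\}}$ have been pruned against (projections of) every constraint whose scope meets $\{u,v\}$, which is how $(2,3)$-minimality is standardly set up but is worth making explicit before invoking $2$-decomposability of the constraint relations of $\caB$.
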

\begin{lemma}[\cite{egri2010complexity}]\label{lmm:bi-arc-has-majority}
  Let $H$ be a bi-arc graph.
  Then, $\bbH^L$ admits a majority operation.
\end{lemma}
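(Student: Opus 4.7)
The plan is to construct, directly from the bi-arc representation of $H$, an explicit conservative majority operation $m : V(H)^3 \to V(H)$ that is a polymorphism of $\caH^L$. Since $\caH^L$ contains the unary relation $S_v = \{(v)\}$ for every $v \in V(H)$, any polymorphism of $\caH^L$ is forced to satisfy $m(x,y,z) \in \{x,y,z\}$, so what has to be produced is a \emph{conservative} majority preserving the binary edge relation $E(H)$; conservativity automatically takes care of all other unary relations $S \subseteq V(H)$ in $\caH^L$.

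First I would unpack the bi-arc representation: a circle $C$ with two distinguished points $p, q$ and, for each $x \in V(H)$, a pair of arcs $(N_x, S_x)$ with $p \in N_x$, $q \in S_x$, where $\{x,y\} \in E(H)$ is equivalent to $N_x \cap S_y = \emptyset = N_y \cap S_x$. Cutting the circle at $q$ turns the arcs $N_x$ into intervals on a line, and cutting at $p$ does the same for the arcs $S_x$. This yields two linear orders on $V(H)$ (say, by right endpoint of $N_x$, and by right endpoint of $S_x$), and the adjacency condition translates into a simple interleaving condition on these endpoints.

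Next I would define $m(x,y,z)$ as the vertex whose $N$-interval is the ``middle'' one among the three, i.e.\ the one whose $N$-right endpoint is the median of the three right endpoints in the linearization of $C\setminus\{q\}$ (breaking ties by a fixed total order on $V(H)$). This is conservative by construction, and it is a majority operation because the median of $a,a,b$ is $a$. The technical content is showing edge preservation: given $(x_i, y_i) \in E(H)$ for $i=1,2,3$, one must show $(m(x_1,x_2,x_3),\, m(y_1,y_2,y_3)) \in E(H)$, which reduces to checking that if every $N_{x_i}$ is disjoint from every $S_{y_j}$ under the adjacency hypothesis, then this disjointness is inherited by the chosen medians.

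The main obstacle will be this edge-preservation verification, because the median chosen on the $x$-side (using the linearization via $q$) must be compatible with the median chosen on the $y$-side, which in general uses the dual linearization via $p$; a naive one-sided definition fails on mixed reflexive/irreflexive instances. To handle this I would split into cases along the known structure of bi-arc graphs: if $H$ is reflexive, it is an interval graph and the classical interval-median is already a conservative majority; if $H$ is irreflexive, it is bipartite with circular-arc complement and a bipartite variant of the interval-median works on each side separately; in the general case, one defines $m(x,y,z)$ piecewise depending on whether the three $N$-intervals (respectively $S$-intervals) are pairwise interleaved or nested on the circle, and uses the bi-arc compatibility condition to show that the resulting choice simultaneously preserves $E(H)$ from both the $N$-side and the $S$-side. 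Verifying this case analysis is the technical core of the argument given in \cite{egri2010complexity}.
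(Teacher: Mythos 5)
The paper does not actually prove this lemma: it is imported wholesale by citation from \cite{egri2010complexity} (and ultimately rests on the equivalence, due to Feder, Hell, Huang and to Brewster et al., between being a bi-arc graph and admitting a conservative majority polymorphism), so there is no in-paper argument to compare yours against. Your opening reduction is correct and worth stating: since $\caH^L$ contains \emph{every} unary relation $S \subseteq V(H)$, an operation is a polymorphism of $\caH^L$ exactly when it is conservative and preserves $E(H)$, so the lemma is equivalent to ``every bi-arc graph admits a conservative majority polymorphism.''

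The genuine gap is that your sketch stops precisely where the theorem lives. The concrete operation you propose (median of the right endpoints of the arcs $N_x$ in the linearization at $q$) is one you yourself concede fails in general, and the proposed repairs are assertions rather than arguments: the claim that in the irreflexive case ``a bipartite variant of the interval-median works on each side separately'' is not justified (bipartiteness plus circular-arc complement does not by itself hand you a majority; one must use the bi-arc representation, and the $N$-side and $S$-side choices must be shown consistent), and in the general case the piecewise definition and its edge-preservation check --- that the medians chosen via the $N$-arcs and via the $S$-arcs are simultaneously compatible with the disjointness conditions defining $E(H)$ --- are explicitly deferred back to \cite{egri2010complexity}. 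That verification \emph{is} the lemma, so as a standalone proof the attempt is incomplete; the legitimate alternative, which is what the paper does, is simply to quote the known result. If you want a self-contained proof, you would need to fix a concrete conservative ternary operation (the standard route orders vertices by the endpoints of both arc families and defines the value by a case analysis on which of the three inputs ``dominates'' in each order) and then carry out the edge-preservation case analysis in full.
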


\begin{proof}[Proof of Lemma~\ref{lmm:sublinear-upper}]
  Let $(G, L, \biw, f)$ be an input for $\LHOM{H}$.
  From Lemmas~\ref{lmm:strict-width} and~\ref{lmm:bi-arc-has-majority},
  we can assume the 2-Helly property.
  Our algorithm is described below.
  \begin{algorithm}
    \caption{$\LHOM{H}$ tester for a bi-arc graph $H$}
    \label{alg:majority}
    \begin{algorithmic}[1]
      \STATE Run \ttmin and let $\caS_U$ be the set of tuples obtained for $U \subseteq V(G), |U| \leq 3$.
      \STATE Let $X$ be a set of $\Theta(1/\epsilon)$ vertices chosen according to $\biw$
      \IF{$\exists v \in X$ such that $f(v) \not \in \caS_{\{v\}}$}
      \STATE Reject the input. \label{line:scalar}
      \ENDIF
      \STATE Let $Y_1,Y_2$ be sets of $\Theta(\sqrt{n/\epsilon})$ vertices chosen according to $\biw$.
      \IF{$\exists v \in Y_1, u \in Y_2$ such that $(f(v),f(u)) \not \in \caS_{\{v,u\}}$}
      \STATE Reject the input. \label{line:pair}
      \ENDIF
      \STATE Accept the input.
    \end{algorithmic}
  \end{algorithm}

  Note that \ttmin updates $\caS_U$ using not only the graph $G$ but also the list constraint $L$.
  The query complexity is clearly $O(\sqrt{n/\epsilon})$.
  It is clear that the tester always accepts when $f$ is a list-homomorphism.
  
  Suppose that $\dist_H(f) \geq \epsilon$.
  Let $U$ be a subset of $V(G)$ with maximum weight such that the partial homomorphism $f|_{U}$ is extendable to a list-homomorphism.
  Clearly, we have $\biw(U) + \epsilon \leq 1$.
  Let $v$ be a vertex in $V(G) \setminus U$.
  From the maximality of $U$,
  we cannot extend $f|_{U \cup \{v\}}$ to a list-homomorphism.
  Thus, from the 2-Helly property,
  $v$ is a violating vertex or $(v,u)$ is a violating pair for some $u \in U$.
  Let $A$ be the set of violating vertices in $V(G) \setminus U$,
  and $B = V(G) \setminus (U \cup A)$.
  Note that for any $v \in B$,
  a pair $(v,u)$ is a violating pair for some $u \in U$.
  Since $A \cup B = V(G) \setminus U$,
  we have $\biw(A) + \biw(B) \geq 1  - \biw(U) \geq \epsilon$.

  When $\biw(A) \geq \epsilon / 2$,
  we reject at Line~\ref{line:scalar} with probability at least $1-(1-\epsilon/2)^{\Theta(1/\epsilon)} \geq 2/3$.
  
  Suppose that $\biw(B) \geq \epsilon / 2$.
  For a subset $B' \subseteq B$,
  we define $N(B') = \{u \in U \mid \exists v \in B', (v,u)\text{ is a violating pair}\}$.
  Note that $f|_{U'}$ is extendable to a list-homomorphism where $U' = (U \setminus N(B')) \cup B'$.
  Thus, we must have $\biw(B') \leq \biw(N(B'))$ from the maximality of $U$.

  Let $q = |Y_1| = |Y_2|$ and $c > 0$ be a parameter.
  Let $B' = Y_1 \cap B$ and $F_1$ be the event that $\biw(B') \leq \frac{\epsilon q}{c n}$.
  By choosing $c$ large enough,
  we have $\Pr[F_1] \leq \frac{1}{10}$ from Chernoff's bound.
  Note that $\biw(N(B')) \geq \biw(B') \geq \frac{\epsilon q}{c n}$.
  Let $F_2$ be the event that no violated edge is detected.
  Then, $\Pr[F_2] \leq \Pr[F_1] + \Pr[F_2 \mid \overline{F_1}] \leq \frac{1}{10} + (1 - \frac{\epsilon q}{c n})^{q} \leq \frac{1}{3}$ by choosing the hidden constant in $q = \Theta(\sqrt{n/\epsilon})$ large enough.
\end{proof}
Note that we only use the fact that $\bbH^L$ admits a majority operation.
Thus, our algorithm can be also used to testing homomorphisms to other \textsf{CSPs} admitting majority operations, e.g., \twSAT.

\section{Graphs Not Testable with a Sublinear Number of Queries}\label{sec:sublinear-lower}
In this section, we prove the following, which is ``only if'' part of Theorem~\ref{thr:sublinear-list}.
\begin{lemma}\label{lmm:sublinear-lower}
  If a graph $H$ is not a bi-arc graph,
  then testing $\LHOM{H}$ requires $\Omega(n)$ queries.
\end{lemma}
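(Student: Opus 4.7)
The plan is a local, distance-preserving reduction from testing $\thSAT$---which requires $\Omega(n)$ queries by Ben-Sasson, Harsha, and Raskhodnikova~\cite{ben2006some}---to testing $\LHOM{H}$. Any $q$-query tester for $\LHOM{H}$ is to be converted into an $O(q)$-query tester for $\thSAT$, forcing $q = \Omega(n)$.

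The algebraic ingredient is that $\bbH^L$ admits no majority polymorphism: this is the contrapositive of Lemma~\ref{lmm:bi-arc-has-majority}, combined with the converse direction of the Feder--Hell--Huang characterization that $\LHOM{H}$ is tractable iff $\bbH^L$ has a majority. By the Galois correspondence between $\Pol$ and $\Inv$ together with a case analysis over subalgebras of $\bbH^L$ (the Post-lattice argument of the list-homomorphism dichotomy), there exist two vertices $a,b \in V(H)$ and, for each Boolean ternary clause type $C$, a constant-size ``clause gadget'' $(G_C, L_C)$ of $\LHOM{H}$ with three designated boundary vertices of list $\{a,b\}$ whose list-homomorphisms restricted to the boundary are exactly the satisfying $\{a,b\}$-encoded assignments of $C$. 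The same mechanism yields an ``equality gadget'' enforcing $y = y'$ on $\{a,b\}$.

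Given a $\thSAT$ formula $\varphi$ with $n$ variables $x_1,\dots,x_n$ and $m = O(n)$ clauses, and oracle access to an assignment $\alpha \in \{0,1\}^n$, I would build an $\LHOM{H}$ instance $(G,L,\biw,f_\alpha)$ on $N = O(n)$ vertices consisting of one variable-vertex $v_i$ per variable with $L(v_i) = \{a,b\}$, plus one copy of the appropriate clause gadget per clause with its boundary identified to the relevant $v_i$'s. Set $\biw(v_i) = 1/(2n)$ and distribute the remaining weight $1/2$ uniformly over gadget vertices. To answer a tester query: on a variable vertex $v_i$, query $\alpha(x_i)$ and return $a$ or $b$ accordingly; on a gadget vertex of clause $C$, query the three boundary bits of $\alpha$ (a constant overhead) and return the value prescribed by a fixed, precomputed extension of the resulting boundary tuple to a list-homomorphism on $G_C$ (any list-valid default if no extension exists). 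If $\alpha$ satisfies $\varphi$ then every clause gadget admits a valid extension, so $f_\alpha$ really is a list-homomorphism and the tester accepts with probability at least $2/3$. If $\alpha$ is $\epsilon$-far from satisfying $\varphi$, then any list-homomorphism $g$ from $G$ to $H$ determines an assignment $\beta_g(x_i) := \mathbf{1}[g(v_i) = b]$ which must satisfy $\varphi$ by gadget enforcement; hence $g$ and $f_\alpha$ disagree on at least $\epsilon n$ variable vertices, giving $\dist_H(f_\alpha) \geq \epsilon/2$ and forcing the tester to reject with probability at least $2/3$. Since each tester query costs $O(1)$ queries to $\alpha$, any $q(N)$-query $\LHOM{H}$ tester yields an $O(q(n))$-query $\thSAT$ tester, whence $q(n) = \Omega(n)$.

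The main obstacle is the algebraic step: from the single hypothesis that $\bbH^L$ has no majority polymorphism, one must produce explicit constant-size gadgets that pp-define the Boolean clause and equality relations on some two-element subalgebra. This is where the full strength of the $\LHOM{H}$ dichotomy enters, and it proceeds through a careful subalgebra/quotient analysis of $\bbH^L$ together with the Post lattice of Boolean clones. Once the gadgets are in place, both the reduction and the distance-preservation analysis are essentially routine.
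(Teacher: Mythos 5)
Your overall strategy---a gap-preserving local reduction that turns an $\LHOM{H}$ tester into a tester for a hard Boolean assignment-testing problem from~\cite{ben2006some}---is the same in spirit as the paper's, and your distance analysis for the reduction (weights split between variable vertices and gadget vertices, gadget values computed locally from $O(1)$ boundary bits) is essentially sound \emph{given} the gadgets. The genuine gap is the algebraic step you yourself flag and then assert: the claim that ``$\bbH^L$ admits no majority polymorphism'' yields, via the Galois correspondence and a Post-lattice case analysis, a two-element subset $\{a,b\}\subseteq V(H)$ on which every ternary Boolean clause relation is the boundary projection of a constant-size list-homomorphism gadget. Absence of a majority is strictly weaker than what you need: an idempotent algebra with a Maltsev (affine) term has no majority either, yet it can only express coset-of-subspace relations and certainly not $\thSAT$ clauses, so no Post-lattice argument can start from ``no majority'' alone. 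What you actually need is that some pair $\{a,b\}$ carries only projections in the restriction of $\Pol{\caH^L}$ (so that \emph{every} Boolean relation on $\{a,b\}$ is invariant, hence pp-definable), or more generally the type-$1$ fact for non-bi-arc $H$; and even then the hard structure may a priori live only in a quotient of a subalgebra of a power of $\bbH^L$, not on a two-element subset of $V(H)$ itself. You would have to invoke and correctly apply the conservative-CSP/Feder--Hell--Huang hardness analysis to get gadgets inside $H$, and you have not done so; as written the key premise does not imply the key conclusion. A further small consequence of the weak premise: if the bad pair supported only the minority (affine) operation, your $\thSAT$ gadgets would not exist and you would have to reduce from $\mathsf{3LIN}$ instead.

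For comparison, the paper sidesteps the need for explicit gadgets on a pair of vertices of $H$: it proves that testability is preserved under pp-definability (Lemma~\ref{lmm:closed-under-algebra}, where the equality-relation case needs a separate preprocessing test and a Chebyshev argument) and under taking subalgebras, homomorphic images and powers (Lemma~\ref{lmm:closed-under-variety}), then combines the facts that non-bi-arc $H$ gives $\caV(\bbH^L)$ admitting type~$1$ (Lemma~\ref{lmm:bi-arc-hard}), that type~$1$ puts $\mathsf{3LIN}$ in $\Str{\bbB}$ for some $\bbB\in\caV(\bbH^L)$ (Lemma~\ref{lmm:hard-variety}), and that testing $\mathsf{3LIN}$ needs $\Omega(n)$ queries (Lemma~\ref{lmm:lin-hard}). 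Your route could in principle work and would be more self-contained at the reduction level, but its correctness hinges entirely on the unproven gadget-existence claim, which is exactly the content the paper imports through the variety machinery; until you supply that (with the correct hypothesis, not merely ``no majority''), the proof is incomplete.
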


To prove Lemma~\ref{lmm:sublinear-lower},
we make use of a sequence of reductions.
First, we define reductions used in this section.
\begin{definition}
  Let $\caA$ and $\caB$ be relational structures.
  We say that there is a (randomized) gap-preserving local reduction from $\caB$ to $\caA$ if there exist functions $t_1(n,m),t_2(n,m)$ and constants $c_1,c_2$ satisfying the following:
  there exists a (randomized) construction such that 
  given an input $(\caJ,\biw_J,f_J)$ for $\HOM{\caB}$,
  it generates an input $(\caI,\biw_I,f_I)$ for $\HOM{\caA}$ such that
  \begin{enumerate}
    \setlength{\itemsep}{0pt}
  \item \label{item:size} $| \caI | \leq t_1 (| \caJ |, \|\caJ\|)$,
  \item \label{item:edge-size} $\|\caI\| \leq t_2(|\caJ|, \|\caJ\|)$,
  \item \label{item:hom} if $f_J$ is a homomorphism, then $f_I$ is also a homomorphism,
  \item \label{item:far} if $\dist_{\caB}(f_J) \geq \epsilon$,
    then $\Pr[\dist_{\caA}(f_I) \geq c_1\epsilon] \geq 9/10$, and
  \item \label{item:comp} we can compute $f_I(v)$ for any $v \in I$ by querying $f_J$ at most $c_2$ times.
  \end{enumerate}
\end{definition}

\begin{lemma}\label{lmm:reduction}
  Let $\caA$ be a relational structure such that there exists a tester for $\HOM{\caA}$ with query complexity $q(n,m,\epsilon)$.
  If there exists a gap-preserving local reduction from a relational structure $\caB$ to $\caA$,
  there exists a tester for $\HOM{\caB}$ with query complexity $O(q(t_1(n,m),t_2(n,m),O(\epsilon)))$.
\end{lemma}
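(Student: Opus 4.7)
The plan is to obtain a tester $\caT_B$ for $\HOM{\caB}$ by composing the given tester $\caT_A$ for $\HOM{\caA}$ with the reduction, treating $\caT_A$ as a black box and answering its oracle queries through the reduction. Given an input $(\caJ, \biw_J, f_J)$ for $\HOM{\caB}$ with parameter $\epsilon$, I first invoke the (randomized) reduction to obtain an explicit description of $\caI$ and $\biw_I$ together with the subroutine, guaranteed by item~\ref{item:comp}, that evaluates $f_I(v)$ at any $v \in I$ using at most $c_2$ queries to $f_J$. I then run $\caT_A$ on the input $(\caI, \biw_I, f_I)$ with error parameter $c_1 \epsilon$, simulating each of its oracle queries lazily via this subroutine. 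The description of $\caI$ and $\biw_I$ is produced internally and does not count against the query budget; only the calls to $f_J$ do.

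For the query bound, items~\ref{item:size} and~\ref{item:edge-size} give $|\caI| \leq t_1(|\caJ|,\|\caJ\|)$ and $\|\caI\| \leq t_2(|\caJ|,\|\caJ\|)$. Since $q$ is increasing in its first two arguments and decreasing in the third, $\caT_A$ issues at most $q(t_1(n,m), t_2(n,m), c_1 \epsilon)$ queries to $f_I$, each of which costs at most $c_2 = O(1)$ queries to $f_J$. The total is therefore $O\bigl(q(t_1(n,m), t_2(n,m), c_1 \epsilon)\bigr) = O\bigl(q(t_1(n,m), t_2(n,m), O(\epsilon))\bigr)$.

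For correctness, if $f_J$ is a homomorphism, then by item~\ref{item:hom} so is $f_I$, and $\caT_A$ accepts with probability at least $2/3$. If $\dist_{\caB}(f_J) \geq \epsilon$, then item~\ref{item:far} gives $\dist_{\caA}(f_I) \geq c_1 \epsilon$ with probability at least $9/10$ over the reduction's randomness; conditioned on this event, $\caT_A$ rejects with probability at least $2/3$, so the raw composed rejection probability is at least $9/10 \cdot 2/3 = 3/5$. To cross the $2/3$ threshold I use standard amplification: run $\caT_A$ a constant number of times independently on the (same) reduced instance and take a majority vote, boosting both its completeness and soundness to, say, $19/20$. The composed tester then accepts a homomorphic $f_J$ with probability at least $19/20$ and rejects an $\epsilon$-far $f_J$ with probability at least $9/10 \cdot 19/20 > 2/3$, at a cost of only a constant factor in queries.

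There is no real obstacle---the statement is a black-box composition result and the argument is essentially definition chasing. The only bookkeeping point worth care is the separation of information flow: the reduction constructs $\caI$ and $\biw_I$ explicitly (which $\caT_A$ then samples and examines at zero query cost), while $f_I$ is accessed only implicitly via item~\ref{item:comp}, so that the query complexity against $f_J$ scales only with the number of queries $\caT_A$ makes to $f_I$, not with the overall size of $\caI$.
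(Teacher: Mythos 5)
Your proposal is correct and follows essentially the same route as the paper: run the given tester on the reduced instance with error parameter $c_1\epsilon$, simulate each query to $f_I$ via condition~5 at cost $c_2=O(1)$ queries to $f_J$, bound the query count using the monotonicity of $q$ together with conditions~1 and~2, and amplify the resulting $9/10\cdot 2/3$ soundness by a constant number of independent repetitions with a majority vote. Your explicit accounting of the $9/10\cdot 19/20>2/3$ bound and of keeping the reduction's randomness fixed during amplification is just a more detailed writeup of the paper's argument, not a different one.
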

\begin{proof}
  Let $(\caJ,\biw_J,f_J)$ be an input for $\HOM{\caB}$.
  Let $(\caI,\biw_I,f_I)$ be the (random) input for $\HOM{\caA}$ given by the reduction.

  We run $\caA$ with an error parameter $c_1\epsilon$.
  If $f_J$ is a homomorphism,
  then $\caA$ accepts with probability at least $2/3$.
  If $f_J$ is $\epsilon$-far from homomorphisms,
  then $\caA$ rejects with probability at least $9/10 \cdot 2/3 = 3/5$.
  In both cases,
  we can increase the probability by running $\caA$ a constant number of times and take the majority of outputs.

  Since we can compute the value of $f_I(v)$ by querying $f_J$ at most $c_2$ times,
  the number of queries to $f_J$ is at most $O(c_2 q(t_1(n,m),t_2(n,m),c_1\epsilon))$.
\end{proof}

For an algebra $\bbA$,
we say that $\HOM{\bbA}$ is testable with $q(n,m,\epsilon)$ queries if,
for any relational structure $\caA \in \Str{\bbA}$, 
$\HOM{\caA}$ is testable with $q(n,m,\epsilon)$ queries.

\begin{lemma}\label{lmm:closed-under-algebra}
  Let $\caA$ be a relational structure such that $\HOM{\caA}$ is testable with $q(n,m,\epsilon)$ queries.
  Then, $\HOM{\bbA}$ is also testable with $O(1/\epsilon + q(O(n+m),O(m),O(\epsilon)))$ queries.
\end{lemma}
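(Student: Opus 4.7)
The plan is: for each $\caA' \in \Str{\bbA}$, construct a gap-preserving local reduction from $\HOM{\caA'}$ to $\HOM{\caA}$, then invoke Lemma~\ref{lmm:reduction}. The key tool is the $\Inv$--$\Pol$ Galois correspondence. Since $\Term{\bbA}=\Pol{\caA}$ (the basic operations already form a clone) and $\caA'\in\Str{\bbA}$ means $\Pol{\caA}\subseteq\Pol{\caA'}$, every relation $R$ of $\caA'$ lies in $\Inv(\Pol{\caA})$ and therefore admits a fixed primitive positive definition
\[
R(x_1,\ldots,x_n)\ \Longleftrightarrow\ \exists y_1,\ldots,y_k.\ \bigwedge_{i} R_i(\mathbf{z}_i)
\]
over the relations of $\caA$ (using equality freely), where $k$ and the number of conjuncts depend only on $\caA$ and $\caA'$, not on the input.

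Given an input $(\caJ,\biw_J,f_J)$ for $\HOM{\caA'}$, I would build $\caI$ as follows: keep all vertices of $\caJ$, and for each tuple $\mathbf{t}=(v_1,\ldots,v_n)\in R^\caJ$ introduce a fresh copy of the pp-gadget with $k$ new auxiliary variables $y^{\mathbf{t}}_1,\ldots,y^{\mathbf{t}}_k$, inserting into $\caI$ the atoms $R_i(\mathbf{z}_i)$ with $x_j$ replaced by $v_j$ and $y_j$ by $y^{\mathbf{t}}_j$. This yields $|\caI|\leq n+O(m)$ and $\|\caI\|\leq O(m)$. Next, set $f_I(v)=f_J(v)$ on $v\in J$; to define $f_I$ on the auxiliaries of the gadget for $\mathbf{t}$, query $f_J(v_1),\ldots,f_J(v_n)$, check in $O(1)$ time whether $(f_J(v_1),\ldots,f_J(v_n))\in R^{\caA'}$, and if so set $f_I(y^{\mathbf{t}}_1),\ldots,f_I(y^{\mathbf{t}}_k)$ to a lexicographically first pp-witness, otherwise to a fixed default value in $A$. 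Finally, let $\biw_I$ place total mass $\tfrac12$ on the original vertices (proportionally to $\biw_J$) and the remaining $\tfrac12$ uniformly over the auxiliary vertices.

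Items~\ref{item:size},~\ref{item:edge-size},~\ref{item:comp} of the reduction hold by construction, with $t_1(n,m)=O(n+m)$, $t_2(n,m)=O(m)$, and $c_2=O(1)$. For item~\ref{item:hom}, if $f_J$ is a homomorphism then every gadget is pp-satisfiable given its original values, so the canonical witnesses make $f_I$ a homomorphism to $\caA$. For the gap-preservation item~\ref{item:far}, take any homomorphism $f_I^*\colon I\to A$; the pp-definition forces $f_I^*|_J$ to be a homomorphism from $\caJ$ to $\caA'$, hence
\[
\dist(f_I,f_I^*)\ \geq\ \sum_{v\in J}\tfrac{1}{2}\biw_J(v)\,[f_J(v)\neq f_I^*(v)]\ \geq\ \tfrac{1}{2}\,\dist_{\caA'}(f_J)\ \geq\ \tfrac{\epsilon}{2},
\]
so $c_1=\tfrac12$ works. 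Plugging into Lemma~\ref{lmm:reduction} gives a tester of query complexity $O(q(O(n+m),O(m),O(\epsilon)))$, and the additive $O(1/\epsilon)$ in the statement absorbs any slack. The main obstacle is exactly this gap-preservation step: if the auxiliary vertices carried most of the mass, $\epsilon$-farness in $\caJ$ would wash out in $\caI$, so one has to insist that the original vertices retain a constant fraction of the total weight, which the $\tfrac12$--$\tfrac12$ split guarantees uniformly in $n$ and $m$.
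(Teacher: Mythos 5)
Your overall route is the same as the paper's: both arguments go through the Galois correspondence (the relations of any $\caB \in \Str{\bbA}$ lie in $\Inv(\Pol{\caA})$, hence are primitive-positive definable from the relations of $\caA$ together with equality) and then build gap-preserving local reductions to feed into Lemma~\ref{lmm:reduction}. The paper implements the pp-closure step by step, via the six Bodnarchuk--Kaluzhnin--Kotov--Romov constructions, whereas you do it in one shot with a per-tuple gadget; your handling of the existentially quantified variables is fine (the paper simply gives the auxiliary vertices weight zero, which is even cleaner than your half--half split, but both preserve the gap up to a constant factor).

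The genuine gap is hidden in the phrase ``using equality freely.'' Equality atoms between two \emph{free} variables of the pp-definition---and, more basically, the equality relation itself, which belongs to $\Inv(\Pol{\caA})$ and may therefore be one of the relations of the structure $\caB\in\Str{\bbA}$ you must handle---cannot be written as gadget atoms over the relations of $\caA$: equality is in general not pp-definable from $\caA$'s relations without equality, so the instance $\caI$ you produce is not an instance of $\HOM{\caA}$ at all. The only way to realize such constraints is to identify vertices of $\caJ$, i.e.\ to pass to the quotient by the equivalence relation they generate, and there your reduction breaks: $f_J$ need not be constant on a block, condition~\eqref{item:comp} forbids computing a majority value of a large block, and taking an arbitrary representative can destroy the gap. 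Concretely, if $\caB$ contains equality and $\caJ$ is a perfect matching of equality constraints with $f_J$ violating an $\epsilon$-fraction of them by weight, then after identification any choice of representatives gives a map $f_I$ with $\dist_{\caA}(f_I)=0$ while $\dist_{\caB}(f_J)\geq \epsilon/2$, so condition~\eqref{item:far} fails. The paper's Case~5 is devoted to exactly this: it first spends $O(1/\epsilon)$ queries directly testing that $f_J$ approximately respects the equivalence relation (rejecting otherwise), and then defines $f_I$ on each block by sampling a random representative, bounding $\dist(f_J,f_I)$ in expectation and variance and applying Chebyshev. That preprocessing is the sole source of the additive $O(1/\epsilon)$ in the statement; in your construction nothing uses it, and without such a step the equality case cannot be handled.
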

\begin{proof}
  Let $\caB$ be a relational structure in $\Str{\bbA}$.
  Then, each relation of $\caB$ is obtained from relations of $\caA$ in finitely many steps by using the following constructions~\cite{bodnarchuk1969galois1,bodnarchuk1969galois2}:
  \begin{enumerate}
    \setlength{\itemsep}{0pt}
  \item removing a relation,
  \item adding a relation obtained by permuting the variables of a relation,
  \item adding the intersection of two relations of the same arity,
  \item adding the product of two relations,
  \item adding the equality relation, and
  \item adding a relation obtained by projecting an $n$-ary relation to its first $n-1$ variables.
  \end{enumerate}
  It thus suffices to prove that $\caB$ is testable if $\caB$ is obtained by any of those constructions from $\caA$.
  To this end, we will give gap-preserving local reductions from $\caB$ to $\caA$ with $t_1(n,m) \leq n+m, t_2(n,m) \leq 2m, c_1 = c_2 = O(1)$.
  (For Case~5, we need reprocessing that costs $O(1/\epsilon)$ queries.)
  Let $(\caJ,w_J,f_J)$ be an input of $\HOM{\caB}$.
  Then, we construct another input $(\caI,w_I,f_I)$ of $\HOM{\caA}$ so that the construction satisfies conditions of gap-preserving local reductions.
  Since checking conditions~\eqref{item:size},\eqref{item:edge-size},~\eqref{item:hom} and~\eqref{item:comp} are straightforward,
  we will only check the condition~\eqref{item:far}.
  For any case below,
  we define $\overline{f}_I:I \to A$ as the homomorphism closest to $f_I$.
  Then, we will construct a homomorphism $\widetilde{f}_J:J \to B$ using $\overline{f}_I$ and show that $\dist(f_I,\overline{f}_I)$ must be large using the fact that $\dist(f_J,\widetilde{f}_J) \geq \epsilon$.
  Then, the lemma follows by iteratively applying Lemma~\ref{lmm:reduction}.
  
  \paragraph{Case 1:}
  Let us suppose first that $\caB$ is obtained from $\caA$ by removing a relation of $\caA$.
  Let $\caI$ be the relational structure obtained from $\caJ$ by supplementing the relations of $\caJ$ by an empty relation corresponding to the relation removed from $\caA$.
  We set $w_I = w_J$ and $f_I = f_J$.
  Then, we define $\widetilde{f}_J = \overline{f}_I$.
  It is clear that $\widetilde{f}_J$ is a homomorphism from $\caJ$ to $\caB$.
  Thus, $\dist(f_I,\overline{f}_I) = \dist(f_J,\widetilde{f}_J)  \geq \epsilon$ holds.
  
  \paragraph{Case 2:}
  Let us suppose that $\caB$ is obtained from $\caA$ by adding a relation $S$ obtained from a relation $R$ of $\caA$ by permuting the variables according to a permutation $\pi$.
  Let $\caI$ be the relational structure obtained from $\caJ$ by deleting $S^{\caJ}$ and replacing $R^{\caJ}$ by $R^{\caI} = R^{\caJ} \cup S^{\caJ}_{\pi}$ where $S^{\caJ}_{\pi}$ is obtained by permuting the variables of $S^{\caJ}$ according to $\pi^{-1}$.
  We set $\biw_I = \biw_J$ and $f_I = f_J$.
  Then, we define $\widetilde{f}_J = \overline{f}_I$.
  It is clear that $\widetilde{f}_J$ is a homomorphism from $\caJ$ to $\caB$.
  Thus, $\dist(f_I,\overline{f}_I) = \dist(f_J,\widetilde{f}_J) \geq \epsilon$ holds.

  \paragraph{Case 3:}
  Let $R$ and $S$ be two relations of the same arity of $\caA$.
  Let $T$ denote the intersection of $R$ and $S$.
  Let us suppose that $\caB$ is obtained from $\caA$ by adding the relation $T$.
  Let $\caI$ be the relational structure obtained from $\caJ$ by deleting $T^\caJ$ and replacing $R^\caJ$ by $R^\caI = R^\caJ \cup T^{\caJ}$ and $S^{\caJ}$ by $S^{\caI} = S^{\caJ} \cup T^{\caJ}$.
  We set $\biw_I = \biw_J$ and $f_I = f_J$.
  Then, we define $\widetilde{f}_J = \overline{f}_I$.
  It is clear that $\widetilde{f}_J$ is a homomorphism from $\caJ$ to $\caB$.
  Thus, $\dist(f_I,\overline{f}_I) = \dist(f_J,\widetilde{f}_J) \geq \epsilon$ holds.

  \paragraph{Case 4:}
  Let $R$ and $S$ be two relations of $\caA$.
  Let $T$ denote the product of $R$ and $S$, 
  and let $\caB$ be obtained from $\caA$ by adding the relation $T$.
  Let $\caI$ be the relational structure obtained from $\caJ$ by deleting $T^\caJ$ and replacing $R^\caJ$ by $R^\caI = R^\caJ \cup T^\caJ_1$ and $S^\caJ$ by $S^\caI  = S^\caJ \cup T^\caJ_2$ where $T^\caJ_1$ (resp., $T^\caJ_2$) is the projection of $T^\caJ$ onto the variables of the $R$-part (resp., $S$-part) of $T^{\caJ}$.
  We set $\biw_I = \biw_J$ and $f_I = f_J$.
  Then, we define $\widetilde{f}_J = \overline{f}_I$.
  It is clear that $\widetilde{f}_J$ is a homomorphism from $\caJ$ to $\caB$.
  Thus, $\dist(f_I,\overline{f}_I) = \dist(f_J,\widetilde{f}_J) \geq \epsilon$ holds.

  \paragraph{Case 5:}
  Let us suppose that $\caB$ is obtained from $\caA$ by adding the equality relation.
  Let $\theta$ be the reflexive, symmetric, transitive closure of $\theta'$ where $\theta'$ is the relation of $\caJ$ corresponding to equality in $\caB$.
  Clearly, $\theta$ is an equivalent relation on $J$.
  For a variable $v \in J$,
  we define $v/_\theta$ as the corresponding $\theta$-block.
  For a $\theta$-block $u$,
  we define $\biw_J(u,b) = \sum\limits_{v \in u: f_J(v) = b}\biw_J(v)$,
  $\biw_J(u) = \sum\limits_{b \in B}\biw_J(u,b)$,
  $\biw_J^{\maj}(u) = \max\limits_{b \in B} \biw_J(u,b)$,
  and $f_J^{\maj}(u) = \mathop{\mathrm{argmax}}\limits_{b \in B} \biw_J(u,b)$.
  
  Before we reduce the problem to $\HOM{\caA}$,
  we run the following algorithm first:
  Pick a set of $\Theta(1/\epsilon)$ variables according to $\biw$ and check whether those variables obey $\theta$.
  Let $\widehat{f}_J: J \to B$ be the map such that $\widehat{f}_J(v) = f_J^{\maj}(v/_\theta)$.
  Note that $\widehat{f}_J$ is the map closest to $f_J$ obeying $\theta$.
  It is easy to see that the above algorithm always accepts when $f_J$ is a homomorphism from $\caJ$ to $\caB$ and the query complexity is $O(1/\epsilon)$.
  
  Suppose that $\dist(f_J,\widehat{f}_J) \geq \epsilon / 20$.
  This indicates $\sum_{u:\theta\text{-block}}(\biw_J(u)-\biw_J^{\maj}(u)) \geq \epsilon / 20$.
  It is not hard to show that,
  in such a case,
  the algorithm above rejects $\widehat{f}_J$ with probability at least $2/3$.
  Thus, 
  we assume that $\dist(f_J,\widehat{f}_J) = \sum_{u:\theta\text{-block}}(\biw_J(u)-\biw_J^{\maj}(u)) < \epsilon / 20$ in what follows.

  Now, we define the input $(\caI,\biw_I,f_I)$ for $\HOM{\caA}$.
  We define a structure $\caI$ so that its base set is the set of $\theta$-blocks,
  and its relations are defined as follows:
  For each relation $R^{\caJ}$ of $\caJ$ distinct from $\theta'$,
  we define a relation $R^{\caI}$ on the $\theta$-blocks by stipulating that 
  \begin{eqnarray*}
    (u_1,\ldots,u_r) \in R^{\caI} \text{ iff } \exists v_1 \in u_1,\ldots,\exists v_r \in u_r \text{ such that } (v_1,\ldots,v_r) \in R^{\caJ}.
  \end{eqnarray*}
  Also, we set $\biw_I(u) = \biw_J(u)$.
  We define $f_I(u) = f_J(v)$ where $v$ is a variable in $u$ randomly chosen according to $\biw_J|_{u}$,
  i.e., $v$ is chosen with probability $\biw_J(v)/\biw_J(u)$.

  We define $\dist(f_J,f_I) = \sum\limits_{u:\theta\text{-block}}\sum\limits_{v \in u: f_I(u) \neq f_J(v)}\biw_J(v)$.
  For a $\theta$-block $u$, 
  we define $\dist_u(f_J,f_I) = \sum\limits_{v \in u: f_I(u) \neq f_J(v)}\frac{\biw_J(v)}{\biw_J(u)}$.
  It is clear that $\dist(f_J,f_I) = \sum\limits_{u:\theta\text{-block}}\biw_J(u)\dist_u(f_J,f_I)$.
  
  We have 
  \begin{eqnarray*}
    \E[\dist_u(f_J,f_I)] 
    &=&
    \sum_{b \in B}\frac{\biw_J(u,b)}{\biw_J(u)}\left(1 - \frac{\biw_J(u,b)}{\biw_J(u)}\right)\\
    &\leq&
    \frac{\biw_J^{\maj}(u)}{\biw_J(u)}\left(1 - \frac{\biw_J^{\maj}(u)}{\biw_J(u)}\right) + \left(1 - \frac{\biw_J^{\maj}(u)}{\biw_J(u)}\right)\cdot 1 \\
    &\leq&
    2\left(1- \frac{\biw_J^{\maj}(u)}{\biw_J(u)}\right).
  \end{eqnarray*}
  Thus,
  \begin{eqnarray*}
    \E[\dist(f_J,f_I)] = \E[\sum_{u:\theta\text{-block}}\biw_J(u)\dist_u(f_J,f_I)] 
    \leq 
    \sum_{u:\theta\text{-block}} 2(\biw_J(u)- \biw_J^{\maj}(u))
    < 
    \frac{\epsilon}{10}.
  \end{eqnarray*}

  Also, we have 
  \begin{eqnarray*}
    &&
    \Var[\dist_u(f_J,f_I)] \\
    &=&
    \E[(\dist_u(f_J,f_I) - \E[\dist_u(f_J,f_I)])^2]\\
    &\leq&
    \frac{\biw_J^{\maj}(u)}{\biw_J(u)}\left(1-\frac{\biw_J^{\maj}(u)}{\biw_J(u)} - \E[\dist_u(f_J,f_I)]\right)^2  + \left(1 - \frac{\biw_J^{\maj}(u)}{\biw_J(u)}\right)\cdot 1 \\
    &=&
    \left(1-\frac{\biw_J^{\maj}(u)}{\biw_J(u)}\right)^2  + \left(1 - \frac{\biw_J^{\maj}(u)}{\biw_J(u)}\right) 
    \leq
    2\left(1- \frac{\biw_J^{\maj}(u)}{\biw_J(u)}\right).
  \end{eqnarray*}
  Thus,
  \begin{eqnarray*}
    \Var[\dist(f_J,f_I)] = \Var[\sum_{u:\theta\text{-block}}\biw_J(u)\dist_u(f_J,f_I)]
    \leq 
    \sum_{u:\theta\text{-block}} 2(\biw_J(u)- \biw_J^{\maj}(u))
    \leq 
    \frac{\epsilon}{10}.
  \end{eqnarray*}
  Thus, from Chebyshev's inequality,
  $\Pr[\dist(f_J,f_I) \geq \epsilon/2] \leq 1/16$.

  We check the condition~\eqref{item:far}.
  We define $\widetilde{f}_J$ as $\widetilde{f}_J(v) =  \overline{f}_I(v/_\theta)$.
  It is clear that $\widetilde{f}_J$ is a homomorphism from $\caJ$ to $\caB$.
  Since we have $\dist(f_J,f_I) + \dist(f_I,\overline{f}_I) \geq \dist(f_J,\widetilde{f}_J) \geq \epsilon$,
  it follows that $\Pr[\dist(f_I,\overline{f}_I) \geq \epsilon / 2] \geq 15/16$.

  \paragraph{Case 6:}
  Let us suppose that $\caB$ is obtained from $\caA$ by adding the projection $S$ of an $r$-ary relation $R$ of $\caA$ to its first $r-1$ variables.
  Let  $\caI$ be the relational structure with the base set $J$ extended by a new element for each $(r-1)$-tuple in $S^\caJ$.
  The relations of $\caI$ are those of $\caJ$,
  except that $S^\caJ$ is removed and $R^\caJ$ is replaced by $R^\caI=R^\caJ \cup S^\caJ_{r-1}$ 
  where $S^\caJ_{r-1}$ is obtained from $S^\caJ$ by extending every $(r-1)$-tuple of $S^\caI$ with the corresponding new element in the base set of $I$.
  Note that $|\caI| \leq |\caJ| + \|\caJ\|$ and $\|\caI\| = \|\caJ\|$.
  We set $\biw_I(v) = \biw_J(v)$ for a variable $v \in J$ is and $\biw_I(v) = 0$ for a variable $v$ corresponding to $(r-1)$-tuple in $S^\caJ$.
  We define $f_I$ as follows:
  We set $f_I(v) = f_J(v)$ for $v \in J$ and arbitrary value for a variable $v$ corresponding to an $(r-1)$-tuple  in $ S^\caJ$. (Indeed, we do not have to care about those values since there weights are zero.)
  Then, we define $\widetilde{f}_J(v) = \overline{f}_I(v)$ for $v \in J$.
  It is easy to check that $\widetilde{f}_J$ is a homomorphism,
  and $\dist(f_I,\overline{f}_I) \geq \dist(f,\widetilde{f}_J) \geq \epsilon$.
\end{proof}

We introduce some notions related to algebras.
Let $\bbA = (A; F)$ be an algebra.
A set $B \subseteq A$ is a \textit{subuniverse} of $\bbA$ if for every basic
operation $f \in F$ restricted to $B$ has all the results in $B$.
For a nonempty subuniverse $B$ of an algebra $\bbA$, $f|_{B}$ is the
restriction of $f$ to $B$.
The algebra $\bbB = (B,F|_{B})$ where $F|_{B} = \{f|_{B} \mid f \in F\}$
is a \textit{subalgebra} of $\bbA$.
Algebras $\bbA,\bbB$ are of the \textit{same type} if they have the same number of basic operations and corresponding operations have equal arities.
Given algebras $\bbA,\bbB$ of the same type, a \textit{product}
$\bbA\times \bbB$ is the algebra with the same type as $\bbA$ and $\bbB$
with universe $A \times B$ and basic operations computed
coordinate-wise.
An equivalence relation $\theta$ on $A$ is called a \textit{congruence} of an algebra $\bbA$ if $\theta$ is a subalgebra of $\bbA \times \bbA$.
Given a congruence $\theta$ on $A$, we can form the \textit{homomorphic image} $\bbA /_{\theta}$, whose elements are the equivalence classes
of $\bbA$ and the basic operations are defined so that the natural
projection mapping is a homomorphism $\bbA \to \bbA/_{\theta}$.
If $\bbA$ is idempotent, $\theta$-classes are subuniverses of $\bbA$.
It is known that a relational structure $\caA$ is in $\Str{\bbA}$ if each relation in $\caA$ is a subuniverse of a finite power of $\bbA$.

A \textit{variety} is a class of algebras of the same type closed under formation of subalgebras,
homomorphic images and finite products.
For any algebra $\bbA$,
there is a smallest variety containing $\bbA$, denoted by $\caV(\bbA)$ and called the \textit{variety generated} by $\bbA$.
It is well known that any variety is generated by an algebra and that any member of $\caV(\bbA)$ is a homomorphic image of a subalgebra of a power of $\bbA$.

\begin{lemma}\label{lmm:closed-under-variety}
  Let $\bbA$ be an algebra such that $\HOM{\bbA}$ is testable with $q(n,m,\epsilon)$ queries.
  Then, for any finite algebra $\bbB \in \caV(\bbA)$,
  $\HOM{\bbB}$ is also testable with $O(q(O(n),O(m),O(\epsilon)))$ queries.
\end{lemma}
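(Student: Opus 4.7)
The plan is to invoke Birkhoff's HSP characterization: since $\bbB \in \caV(\bbA)$ is finite, there exist a finite power $\bbA^k$, a subalgebra $\mathbb{C} \leq \bbA^k$, and a congruence $\theta$ on $\mathbb{C}$ such that $\bbB \cong \mathbb{C}/_\theta$. Using this decomposition, I will exhibit three gap-preserving local reductions, one per operation, and chain them via Lemma~\ref{lmm:reduction}. Because $k$ and $|\mathbb{C}|$ depend only on $\bbB$ (not on the input), each reduction blows up the universe size and tuple count by $O(1)$ factors and shrinks the gap $\epsilon$ by an $O(1)$ factor.

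The \emph{power reduction} takes $\caJ \in \Str{\bbA^k}$ and produces $\caI \in \Str{\bbA}$ with universe $J \times [k]$. Each $r$-ary relation $R^{\caJ} \subseteq (A^k)^r$, reread as a subset of $A^{rk}$, is automatically a subuniverse of $\bbA^{rk}$; I include the tuple $((v_1,1), \ldots, (v_1,k), \ldots, (v_r,1), \ldots, (v_r,k))$ in the new $rk$-ary relation for every $(v_1, \ldots, v_r) \in R^{\caJ}$. Set $\biw_I(v, i) = \biw_J(v)/k$ and $f_I(v, i) = (f_J(v))_i$, computable with one query to $f_J$. Maps $J \times [k] \to A$ biject with maps $J \to A^k$, preserving the property of being a homomorphism, and one checks $\dist(f_I, f_I') \geq \dist(f_J, f_J')/k$ since any disagreement at $v$ forces at least one coordinate disagreement.

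The \emph{subalgebra reduction} takes $\caJ \in \Str{\mathbb{C}}$ with universe $C \subseteq A^k$ and regards it as a structure in $\Str{\bbA^k}$ by adjoining one unary relation $U$ with $U^{\caI} = J$ and $U$-interpretation $C$, which is a subuniverse of $\bbA^k$; the remaining relations stay subuniverses of $\bbA^k$ because $C$ being a subuniverse makes the basic operations of $\mathbb{C}$ coincide with those of $\bbA^k$ restricted to $C$. The unary constraint forces any homomorphism to land in $C$, so the set of homomorphisms and the distance function are unchanged, at the cost of $|J|$ extra unary tuples. The \emph{homomorphic image reduction} takes $\caJ \in \Str{\bbB}$, fixes any section $s : B \to C$ of the natural projection $\pi : C \to B$, and replaces every relation $R^{\caJ}$ by its preimage $\pi^{-1}(R^{\caJ})$; a direct check using that $\pi$ is an algebra homomorphism shows $\pi^{-1}(R^{\caJ})$ is a subuniverse of $\mathbb{C}^r$. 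Set $f_I = s \circ f_J$ (one query). For any homomorphism $f_I'$ to the new structure, $\pi \circ f_I'$ is a homomorphism to $\caJ$; whenever $f_I(v) = f_I'(v)$ we also have $f_J(v) = \pi(f_I'(v))$, so $\dist(f_I, f_I') \geq \dist(f_J, \pi \circ f_I') \geq \epsilon$.

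Composing the three reductions in the order (H), (S), (P) and applying Lemma~\ref{lmm:reduction} at each step yields a tester for $\HOM{\caJ}$, for any $\caJ \in \Str{\bbB}$, with query complexity $O(q(O(n), O(m), O(\epsilon)))$. The main obstacle is the power step, where the universe expands by a factor $k$ and the gap shrinks by $1/k$; these constants must be propagated carefully, and I must verify that flattening an $r$-ary relation on $A^k$ really produces a subuniverse of $\bbA^{rk}$ so that the output legitimately belongs to $\Str{\bbA}$. The other two reductions are essentially bookkeeping of subuniverses manifestly preserved by the respective constructions.
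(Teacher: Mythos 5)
Your proof is correct and follows essentially the same route as the paper: it handles the H, S, P constructions by exactly the same three gap-preserving local reductions (preimages under the quotient map, adding the subuniverse as a unary constraint, and flattening the power into $k$ weighted copies with gap loss $1/k$), chained via Lemma~\ref{lmm:reduction} and the fact that every member of $\caV(\bbA)$ is a homomorphic image of a subalgebra of a finite power. The only differences are presentational (explicit Birkhoff decomposition and a fixed section $s$ instead of an arbitrary choice of preimage).
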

\begin{proof}
  It suffices to show that every subalgebra, homomorphic image and finite power of $\bbA$ is testable with $O(q(O(n),O(m),O(\epsilon)))$ queries.
  Let $\bbB$ be a subalgebra, a homomorphic image, or a finite power of $\bbA$ and let $\caB$ be a relational structure on $B$ such that the relations of $\caB$ are subalgebras of finite powers of $\bbB$.
  From Lemma~\ref{lmm:reduction},
  it suffices to show a gap-preserving local reduction from $\caB$ to $\caA$.
  We follow the approach similar to the proof of Lemma~\ref{lmm:closed-under-algebra}.
  Given an input $(\caJ,\biw_J,f_J)$ for $\HOM{\caB}$,
  we define another structure $(\caI,\biw_I,f_I)$ for $\HOM{\caA}$.
  Then, we show that the construction satisfies the conditions of gap-preserving local reductions.
  Since checking conditions~\eqref{item:size},~\eqref{item:edge-size},~\eqref{item:hom}, and~\eqref{item:comp} are straightforward,
  we will only check the condition~\eqref{item:far}.
  For any case below, we define $\overline{f}_I:I\to A$ as a homomorphism closest to $f_I$.
  Then, we will construct a homomorphism $\widetilde{f}_J:J \to B$ from $\overline{f}_I$ and show that $\dist(f_I,\overline{f}_I)$ must be large by using the fact that $\dist(f_J,\widetilde{f}_J) \geq \epsilon$.
  
  Suppose first that $\bbB$ is a subalgebra of $\bbA$.
  Let $\caA$ be the relational structure whose base set is $A$ and whose relations are all the relations of $\caB$ and $B$ as a unary relation.
  Notice that the relations of $\caA$ are subalgebras of finite powers of $\bbA$,
  and $\caA \in \Str{\bbA}$.
  We take $\caI$ to be $\caJ$ with all of its relations adding $B$ as a unary relation.
  In particular, $I = J$.
  Then, we define $\biw_I = \biw_J$ and $f_I = f_J$.
  Due to the unary relations,
  $\overline{f}_I(v) \in B$ must hold for every $v \in I$.
  Thus, $\overline{f}_I$ is also a homomorphism from $\caJ$ to $\caB$.
  Thus, $\dist_\caA(f) = \dist(f_I,\overline{f}_I) \geq \epsilon$.

  Secondly, suppose that $\bbB$ is a homomorphic image of $\bbA$ under the homomorphism $h:A \to B$.
  This time, let $\caA$ be the relational structure whose base set is $A$ and whose relations are the preimages under the homomorphism $h$ of all the relations of $\caB$.
  Notice that the relations of $\caA$ are subalgebras of finite powers of $\bbA$,
  and $\caA \in \Str{\bbA}$.
  We take $\caI$ to be the relational structure whose base set is $J$ and whose relations are the preimages under the homomorphism $h$ of all the relations of $\caI$.
  We define $\biw_I = \biw_J$ and $f_I:I \to A$ so that $f_I(v)$ is any element in $h^{-1}(f_J(v))$.
  Note that $h\circ \overline{f}_I$ is a homomorphism from $I$ to $B$.
  From the construction, 
  we have $\dist(f_I,\overline{f}_I) \geq \dist(h\circ f_I,h\circ \overline{f}_I) = \dist(f_J,h\circ \overline{f}_I) \geq \epsilon$.

  Finally, suppose that $\bbB=\bbA^k$.
  Let $\caA$ be the relational structure with the following relations:
  If $R$ is an $s$-ary relation of $\caB$,
  define $R_0$ to be the $sk$-ary relation such that,
  if $(b_1,\ldots,b_s) \in R$ with $b_i=(a_{1,i},\ldots,a_{k,i})$,
  we put the $sk$-tuple $(a_{1,1},\ldots,a_{1,s},\ldots,a_{k,1},\ldots,a_{k,s})$ in $R_0$.
  Note that the $sk$-ary relations obtained in this way are subalgebras of finite powers of $\bbA$,
  and $\caA \in \Str{\bbA}$.
  We take $\caI$ to be the union of $k$ disjoint copies of $J$ with one $sk$-ary relation for each $s$-ary relation of $\caJ$.
  An $sk$-tuple in the new relation on $\caI$ is formed by the $k$ copies of an $s$-tuple in the old relation on $\caJ$,
  that is,
  if $(x_1,\ldots,x_s)$ is in the old relation on $\caJ$ and $x_{i,j}$ is the $i$-th copy of $x_j$ in $J$,
  then $(x_{1,1},\ldots,x_{1,s},\ldots,x_{k,1},\dots,x_{k,s})$ is in the new relation.
  We define $\biw_I(x_i) = \biw_J(x) / k$ if $x_i$ is a copy of $x$.
  For a map $f_J:J \to B$,
  we define $f_I:I \to A$ as follows:
  If $x_{i,j}$ is the $i$-th copy of $x_j$,
  we define $f_I(x_{i,j})$ as the $i$-th element of $f_J(x_j)$.
  We make $\widetilde{f}_J: J \to B$ from $\overline{f}_I$ by $\widetilde{f}_J(x_j) = (\overline{f}_I(x_{1,j}),\ldots,\overline{f}_I(x_{n,j}))$.
  Clearly, $\widetilde{f}_J$ is a homomorphism.
  Thus, $\dist(f_I,\overline{f}_I) \geq \dist(f_J,\widetilde{f}_J) / k \geq \epsilon / k$.
\end{proof}

We define $\mathsf{3LIN} = (\{0,1\}; \{(0,0,0),(0,1,1),(1,0,1),(1,1,0)\},\{(0,0,1),(0,1,0),(1,0,0),(1,1,1)\})$ as the relational structure expressing a system of linear equations over $\bbF_2$ such that each equation has arity three.
\begin{lemma}[\cite{ben2006some}]\label{lmm:lin-hard}
  Testing $\HOM{\mathsf{3LIN}}$ requires $\Omega(n)$ queries even if $m = O(n)$.
\end{lemma}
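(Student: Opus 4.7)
The plan is to apply Yao's minimax principle: construct a fixed $\mathsf{3LIN}$ input $(\caJ, \biw_J)$ on $n$ variables with $m = \|\caJ\| = O(n)$ equations, together with two distributions $D_\mathrm{YES}$ and $D_\mathrm{NO}$ over assignments $f : J \to \{0,1\}$ such that $D_\mathrm{YES}$ is supported on homomorphisms, $D_\mathrm{NO}$ is supported on assignments that are $\epsilon$-far from every homomorphism, and the two distributions are statistically indistinguishable to any algorithm that reads fewer than $\alpha n$ entries of $f$.

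I would construct $\caJ$ by the probabilistic method as a planted random $3$-LDPC system: fix $x^{\star} \in \{0,1\}^n$, draw $m = cn$ equations whose left-hand sides are independent uniform random triples of variables, and choose each right-hand side so that $x^{\star}$ is a satisfying assignment. Let $V \subseteq \bbF_2^n$ be the affine solution space and $H \in \bbF_2^{m \times n}$ the resulting parity-check matrix. Two properties of random $3$-uniform LDPC codes, holding with probability $1 - o(1)$ for suitably chosen $c$, will drive the argument: (i) \emph{linear dual distance}: every nonzero vector in the row span of $H$ has Hamming weight at least $\alpha n$, for a constant $\alpha > 0$; (ii) \emph{linear covering radius}: a $1 - o(1)$ fraction of $y \in \{0,1\}^n$ satisfy $\dist(y, V) \geq \epsilon n$. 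Property (i) is obtained by a first-moment calculation bounding the expected number of low-weight combinations of rows of $H$, while (ii) follows by noting that a uniform $y$ violates $\Omega(m)$ equations with high probability together with an expansion argument showing each bit flip can repair only $O(1)$ of the violated equations.

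Set $D_\mathrm{YES}$ to be the uniform distribution on $V$ and $D_\mathrm{NO}$ the uniform distribution on $\{0,1\}^n$ conditioned on being $\epsilon$-far from $V$; property (ii) shows the conditioning costs only an $o(1)$ factor in total variation. The key indistinguishability step is the following: for any, possibly adaptive, deterministic decision tree of depth $q < \alpha n$, the transcript has identical distribution under $D_\mathrm{YES}$ and under the unconditioned uniform distribution on $\{0,1\}^n$. It suffices to show that for any set $S \subseteq J$ with $|S| \leq q$, the marginal of $D_\mathrm{YES}$ on $S$ is uniform on $\bbF_2^S$. By standard linear-algebraic duality, this holds if and only if no nonzero vector of the row span of $H$ is supported inside $S$, which property (i) guarantees since any such vector would have weight $\leq q < \alpha n$.

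The main obstacle is establishing the dual distance property (i) with a large enough constant $\alpha$ while keeping $m = O(n)$; this is the classical random-LDPC regime, handled by optimizing the trade-off between the rate $c$ and the distance $\alpha$ via a union bound over $\binom{n}{k}\binom{m}{k}$ pairs for $k \leq \alpha n$. Once (i) and (ii) are in place, Yao's principle yields the claimed $\Omega(n)$ query lower bound for testing $\HOM{\mathsf{3LIN}}$ with $m = O(n)$.
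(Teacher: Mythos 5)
The paper never proves this lemma itself---it is imported verbatim from Ben-Sasson, Harsha and Raskhodnikova \cite{ben2006some}---so you were in effect reproving their theorem, and your attempt has a genuine gap located exactly at your property~(i). The row span of $H$ contains the rows of $H$ themselves, and each row has Hamming weight $3$: that is precisely what makes the system a $\mathsf{3LIN}$ instance. Hence no choice of the density $c$ can make every nonzero vector of the row span have weight $\alpha n$; the dual distance of $C=\ker H$ is at most $3$, and a correct first-moment computation would immediately report the $m$ weight-$3$ rows. With property~(i) gone, the indistinguishability step collapses: if $D_{\mathrm{YES}}$ is uniform on the solution space $V$ and $D_{\mathrm{NO}}$ is (essentially) uniform on $\{0,1\}^n$, then each fixed equation is violated with probability $\tfrac12-o(1)$ under $D_{\mathrm{NO}}$ and never under $D_{\mathrm{YES}}$, so an algorithm that reads the three variables of, say, ten equations and rejects upon seeing a violated one distinguishes the two distributions with $O(1)$ queries, and Yao's principle yields nothing. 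The ``uniform marginals on every small set'' argument you invoke is the right tool only for codes of \emph{linear dual distance}, and such codes have no sparse parity checks, so their membership property is never a $\mathsf{3LIN}$ property; this tension is the whole difficulty of the theorem, not a technicality.

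For comparison, the actual argument of \cite{ben2006some} keeps $D_{\mathrm{YES}}$ uniform over the solutions of $Hx=b$ but takes $D_{\mathrm{NO}}$ supported on assignments satisfying all equations except one: pick $i\in[m]$ uniformly and take a uniform solution of $Hx=b+e_i$. Farness is then not a covering-radius statement but a robustness property of the equation--variable incidence graph (established via expansion): every $e$ with $He=e_i$ has weight $\Omega(n)$, so violating exactly one equation already forces distance $\Omega(n)$ from the solution set. Indistinguishability is likewise graph-theoretic: the view on a query set $S$ differs between the two distributions only if some vector in the row span supported inside $S$ must use row $i$, and unique-neighbour expansion shows that for $|S|=o(n)$ such vectors arise only from sums of $o(m)$ rows, so a uniformly random flipped index $i$ is invisible with probability $1-o(1)$; this is where the $\Omega(n)$ bound actually comes from. (A minor additional point: your justification of (ii) via ``each bit flip repairs $O(1)$ violated equations'' is also inaccurate for uniform random triples, where the maximum variable degree is $\Theta(\log n/\log\log n)$, though a volume bound would repair that; it is immaterial next to the failure of (i).)
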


We introduce the notion of \textit{type set} of an algebra and of a variety.
Roughly speaking, the type set of a finite algebra is a subset of the set ${1, 2, 3, 4, 5}$ whose elements are called types and correspond to certain classes of algebras: 1 to unary algebras, 2 to vector spaces over finite fields, 3 to Boolean algebras, 4 to distributive lattices and 5 to semilattices.
See~\cite{hobby1988structure} for details.
If $\caV$ is a variety, the type set of $\caV$ is the union of the type sets of the finite algebras in $\caV$.
We say that an algebra or variety admits (omits) type $i$ when $i$ is (is not) in its type set. 

\begin{lemma}[\cite{krokhin2005complexity}]\label{lmm:hard-variety}
  Let $\bbA$ be an idempotent algebra such that $\caV(\bbA)$ admits type $1$.
  There exists an algebra $\bbB\in \caV(\bbA)$ such that $\mathsf{3LIN} \in \Str{\bbB}$.
\end{lemma}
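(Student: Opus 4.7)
The plan is to produce a $2$-element \emph{trivial} algebra $\bbB \in \caV(\bbA)$, i.e., one whose only term operations are the projections $t(x_1,\ldots,x_k) = x_i$. For such a $\bbB$, every relation on its universe is an invariant of every term, so in particular the two ternary relations that define $\mathsf{3LIN}$ are preserved; hence, after identifying $B$ with $\{0,1\}$, $\mathsf{3LIN} \in \Str{\bbB}$. The existence of such a $\bbB$ is the content of the theorem, and the strategy is to extract it from a witness of type~$1$ inside $\caV(\bbA)$ using tame congruence theory (TCT).

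Concretely, by the definition of ``$\caV(\bbA)$ admits type~$1$,'' there exist a finite algebra $\mathbb{C} \in \caV(\bbA)$ and a covering pair of congruences $\alpha \prec \beta$ of $\mathbb{C}$ such that $\mathrm{typ}(\alpha,\beta)=1$. The Hobby-McKenzie localization theorem then yields an $(\alpha,\beta)$-minimal set $U \subseteq C$ and a trace $N \subseteq U$ whose induced algebra modulo $\alpha|_N$ is a minimal algebra of type~$1$, i.e., essentially a $G$-set (a set equipped with a permutation group action). I would then choose two distinct $\alpha|_N$-classes $e_0, e_1$; the subuniverse $\{e_0, e_1\}$ of the induced algebra is a $2$-element subquotient of a finite power of $\bbA$, hence a member of $\caV(\bbA)$. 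Because $\bbA$, and therefore every algebra in $\caV(\bbA)$, is idempotent, every term $t$ restricted to $\{e_0,e_1\}$ must satisfy $t(x,\ldots,x)=x$. Combined with the $G$-set structure inherited from the type-$1$ minimal algebra, this collapses every induced term to a projection, producing the desired trivial $\bbB$, and closes the argument.

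The main obstacle is this second step: converting the ``local'' $G$-set structure of the type-$1$ minimal algebra into a genuine $2$-element member of $\caV(\bbA)$ whose term operations are \emph{all} projections. The delicate point is that a $G$-set with a nontrivial action is itself not idempotent, so idempotency in the variety has to be leveraged to force the inherited action on the chosen $2$-element quotient to become trivial; this uses Hobby-McKenzie's structure theorem for minimal algebras together with the fact that $\caV(\bbA)$ is closed under subalgebras, quotients, and finite products. This is the substantive algebraic content supplied by Krokhin~\cite{krokhin2005complexity} (and implicit in the broader Bulatov-Jeavons-Krokhin machinery on CSP hardness); rather than reproduce the full TCT argument from scratch, I would appeal directly to that theorem.
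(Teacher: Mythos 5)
The paper does not actually prove this lemma: it is imported verbatim from Krokhin--Bulatov--Jeavons \cite{krokhin2005complexity}, so the ``paper's proof'' is just the citation. Your proposal is, in substance, the same move: the part you work out yourself --- that a $2$-element algebra $\bbB$ whose term operations are all projections preserves every relation on its universe, hence both ternary relations of $\mathsf{3LIN}$ after identifying $B$ with $\{0,1\}$, so $\mathsf{3LIN} \in \Str{\bbB}$, and that any $2$-element subset of such a trivial algebra is again a subuniverse --- is correct and is exactly the easy glue between the cited theorem and the lemma as stated. For the substantive step you explicitly defer to \cite{krokhin2005complexity}, which matches what the paper does. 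One caution about the portion of the tame-congruence sketch you do spell out: the induced algebra on an $(\alpha,\beta)$-minimal set or trace is formed from \emph{polynomial} operations of the witnessing algebra (translations and constants included), not term operations, so the set $\{e_0,e_1\}$ of $\alpha|_N$-classes equipped with that induced structure is not literally ``a subquotient of a finite power of $\bbA$'' and need not lie in $\caV(\bbA)$; converting the local type-$1$ (G-set) behaviour into a genuine member of $\caV(\bbA)$ --- for idempotent $\bbA$, a quotient of a subalgebra all of whose term operations are projections --- is precisely the nontrivial content of the theorem you cite, as you yourself note in your final paragraph. So as a blind reconstruction the proposal is acceptable and takes the same route as the paper, provided that intermediate claim is read as a heuristic for the cited result rather than as a proved step.
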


\begin{lemma}[\cite{feder2003bi}]\label{lmm:bi-arc-hard}
  Let $H$ be a non bi-arc graph.
  Then, $\caV(\bbH^L)$ admits type $1$.
\end{lemma}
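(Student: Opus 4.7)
The plan is to bridge the graph-theoretic condition (not being a bi-arc graph) with the universal-algebraic condition (admitting type $1$) by going through the theory of conservative constraint satisfaction. The crucial first observation is that $\bbH^L$ is not merely idempotent but \emph{conservative}: since $\caH^L$ contains every unary relation $S \subseteq V(H)$, every subset of $V(H)$ is a subuniverse of $\bbH^L$, so every term operation preserves every subset. This is a much stronger condition than idempotency and lets us invoke the conservative-CSP dichotomy.

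First I would recall the polymorphism-level characterization of bi-arc graphs due to Feder, Hell, and Huang (also used implicitly in Lemma~\ref{lmm:bi-arc-has-majority}): $H$ is a bi-arc graph if and only if $\bbH^L$ admits a majority polymorphism, and in particular if and only if $\bbH^L$ admits a conservative weak near-unanimity term. Thus, if $H$ is not a bi-arc graph, then $\bbH^L$ has no conservative WNU, hence no conservative Taylor term. Second, I would invoke Bulatov's dichotomy for conservative algebras, which states that a finite conservative algebra $\bbA$ generates a variety $\caV(\bbA)$ that omits type $1$ if and only if $\bbA$ admits a conservative Taylor term (equivalently, a conservative WNU term). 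Combining these two facts immediately gives the lemma: a non-bi-arc $H$ yields a $\bbH^L$ with no conservative Taylor term, and hence $\caV(\bbH^L)$ admits type $1$.

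The main obstacle is that the form in which we need the statement is not literally proved in~\cite{feder2003bi}; that paper establishes the graph-theoretic dichotomy and an explicit NP-completeness reduction for non-bi-arc $H$, but the translation into tame-congruence-theoretic language (i.e., admitting type $1$) requires the Hobby--McKenzie machinery together with Bulatov's analysis of conservative clones. A more direct and self-contained route, which avoids quoting the full conservative dichotomy, is to inspect the concrete obstructions identified by Feder and Hell: whenever $H$ fails to be a bi-arc graph, one can extract an induced subgraph of a specific forbidden type, from which one builds a subuniverse $B$ of $\bbH^L$ and a congruence $\theta$ on the induced subalgebra so that $(B;\Pol{\caH^L}|_B)/_\theta$ is essentially unary. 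By the definition of the type set, the presence of such a subquotient in $\caV(\bbH^L)$ witnesses type $1$. Either route, the citation-based one or the direct extraction of an essentially unary subquotient, yields the desired conclusion; I expect the technical heart to be verifying that the forbidden induced substructures of non-bi-arc graphs really do produce an essentially unary quotient in every case.
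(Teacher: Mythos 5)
First, note that the paper does not prove this lemma at all: it is imported as a black box from~\cite{feder2003bi} (with the type-theoretic formulation coming from the algebraic literature around the conservative CSP dichotomy, cf.~\cite{egri2010complexity,krokhin2005complexity,hobby1988structure}). So there is no in-paper argument to match your proposal against; the question is whether your sketch would actually constitute a proof, and as written it does not. The pivotal assertion in your first route is that ``$H$ is bi-arc if and only if $\bbH^L$ admits a conservative WNU (equivalently Taylor) term,'' attributed to Feder--Hell--Huang. What~\cite{feder2003bi} actually proves is: bi-arc implies a conservative majority polymorphism, and non-bi-arc implies $\LHOM{H}$ is NP-complete. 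Since for an idempotent finite algebra ``has a Taylor term'' is equivalent to ``$\caV(\bbH^L)$ omits type $1$'' (Hobby--McKenzie), the missing direction ``non-bi-arc $\Rightarrow$ no Taylor term'' \emph{is} the lemma; asserting it as known begs the question. If instead you try to derive it from the complexity statements (non-bi-arc $\Rightarrow$ NP-complete, together with Bulatov's theorem that conservative algebras whose variety omits type $1$ have tractable CSP), you only get the conclusion under $\mathbf{P}\neq\mathbf{NP}$ --- which is unusable here, because the lemma feeds an unconditional $\Omega(n)$ query lower bound.

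The unconditional content one needs is combinatorial/algebraic: for every non-bi-arc $H$ there is a two-element subset $\{a,b\}\subseteq V(H)$ on which every polymorphism of $\caH^L$ restricts to a projection (no conservative majority, minority, or semilattice behaviour on that pair); the two-element subalgebra of $\bbH^L$ on $\{a,b\}$ is then essentially unary, so $\caV(\bbH^L)$ admits type $1$. Your second route points in exactly this direction, but you explicitly defer its technical heart (``I expect the technical heart to be verifying that the forbidden induced substructures \ldots really do produce an essentially unary quotient in every case''), and that verification --- showing that \emph{every} obstruction to being bi-arc forces such a bad pair --- is precisely the nontrivial graph-theoretic work carried out in~\cite{feder2003bi} and its algebraic reworkings. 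So the proposal identifies the right ingredients (conservativity of $\bbH^L$, the conservative dichotomy, type $1$ via a two-element projection-only subalgebra) but, in both routes, the step that makes the lemma true is either misattributed, made conditional on a complexity assumption, or left unproved.
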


\begin{proof}[Proof of Lemma~\ref{lmm:sublinear-lower}]
  Assume that $\LHOM{H}$ is testable with $o(n)$ queries when $m = O(n)$.
  Then, from Lemma~\ref{lmm:closed-under-algebra},
  $\HOM{\bbH^L}$ is testable with $o(n)$ queries.
  Then, from Lemma~\ref{lmm:closed-under-variety},
  $\HOM{\bbH'}$ is testable with $o(n)$ queries for any $\bbH' \in \caV(\bbH^L)$.
  However, 
  $\bbH^L$ is idempotent and $\caV(\bbH^L)$ admits type $1$ from Lemma~\ref{lmm:bi-arc-hard}.
  Thus, $\mathsf{3LIN} \in \caV(\bbH^L)$ from Lemmas~\ref{lmm:hard-variety},
  and testing $\mathsf{3LIN}$ requires $\Omega(n)$ queries from Lemma~\ref{lmm:lin-hard}.
  Contradiction.
\end{proof}


\begin{thebibliography}{10}

\bibitem{alon2006combinatorial}
N.~Alon, E.~Fischer, I.~Newman, and A.~Shapira.
\newblock A combinatorial characterization of the testable graph properties:
  it's all about regularity.
\newblock In {\em Proc. 38th ACM symposium on Theory of computing}, pages
  251--260, 2006.

\bibitem{alon2003testing}
N.~Alon and A.~Shapira.
\newblock Testing satisfiability.
\newblock {\em Journal of Algorithms}, 47:87--103, 2003.

\bibitem{barto2010cyclic}
L.~Barto and M.~Kozik.
\newblock Cyclic terms in algebraic approach to {CSP}.
\newblock In {\em Proc. 25th IEEE Symposium on Logic in Computer Science},
  volume~1, 2010.

\bibitem{ben2006some}
E.~Ben-Sasson, P.~Harsha, and S.~Raskhodnikova.
\newblock Some 3cnf properties are hard to test.
\newblock {\em SIAM Journal on Computing}, 35(1):1--21, 2006.

\bibitem{bhattacharyya2010unified}
A.~Bhattacharyya, E.~Grigorescu, and A.~Shapira.
\newblock A unified framework for testing linear-invariant properties.
\newblock In {\em Proc. 51st IEEE Symposium on Foundations of Computer
  Science}, pages 478--487, 2010.

\bibitem{bodnarchuk1969galois1}
V.~G. Bodnarchuk, L.~A. Kaluzhnin, V.~N. Kotov, and B.~A. Romov.
\newblock Galois theory for post algebras. i.
\newblock {\em Cybernetics and Systems Analysis}, 5(3):243--252, 1969.

\bibitem{bodnarchuk1969galois2}
V.~G. Bodnarchuk, L.~A. Kaluzhnin, V.~N. Kotov, and B.~A. Romov.
\newblock Galois theory for post algebras. ii.
\newblock {\em Cybernetics and Systems Analysis}, 5(5):531--539, 1969.

\bibitem{bulatov2005h}
A.~Bulatov.
\newblock H-coloring dichotomy revisited.
\newblock {\em Theoretical Computer Science}, 349(1):31--39, 2005.

\bibitem{bulatov2006combinatorial}
A.~Bulatov.
\newblock Combinatorial problems raised from 2-semilattices.
\newblock {\em Journal of Algebra}, 298(2):321--339, 2006.

\bibitem{bulatov2008recent}
A.~Bulatov and M.~Valeriote.
\newblock Recent results on the algebraic approach to the {CSP}.
\newblock {\em Complexity of Constraints}, pages 68--92, 2008.

\bibitem{egri2010complexity}
L.~Egri, A.~Krokhin, B.~Larose, and P.~Tesson.
\newblock The complexity of the list homomorphism problem for graphs.
\newblock In {\em Proc. 27th International Symposium on Theoretical Aspects of
  Computer Science}, volume~5, pages 335--346, 2010.

\bibitem{feder1998list}
T.~Feder and P.~Hell.
\newblock List homomorphisms to reflexive graphs.
\newblock {\em Journal of Combinatorial Theory, Series B}, 72(2):236--250,
  1998.

\bibitem{feder1999list}
T.~Feder, P.~Hell, and J.~Huang.
\newblock List homomorphisms and circular arc graphs.
\newblock {\em Combinatorica}, 19(4):487--505, 1999.

\bibitem{feder2003bi}
T.~Feder, P.~Hell, and J.~Huang.
\newblock Bi-arc graphs and the complexity of list homomorphisms.
\newblock {\em Journal of Graph Theory}, 42(1):61--80, 2003.

\bibitem{feder1998computational}
T.~Feder and M.Y. Vardi.
\newblock The computational structure of monotone monadic snp and constraint
  satisfaction: A study through datalog and group theory.
\newblock {\em SIAM Journal of Computing}, 28(1):57--104, 1998.

\bibitem{fischer2002monotonicity}
E.~Fischer, E.~Lehman, I.~Newman, S.~Raskhodnikova, R.~Rubinfeld, and
  A.~Samorodnitsky.
\newblock Monotonicity testing over general poset domains.
\newblock In {\em Proc. 34th ACM symposium on Theory of computing}, pages
  474--483, 2002.

\bibitem{goldreich10introduction}
O.~Goldreich.
\newblock Introduction to testing graph properties, 2010.

\bibitem{goldreich1998property}
O.~Goldreich, S.~Goldwasser, and D.~Ron.
\newblock Property testing and its connection to learning and approximation.
\newblock {\em Journal of the ACM}, 45(4):653--750, 1998.

\bibitem{goldreich1999sublinear}
O.~Goldreich and D.~Ron.
\newblock A sublinear bipartiteness tester for bounded degree graphs.
\newblock {\em Combinatorica}, 19(3):335--373, 1999.

\bibitem{hell2003algorithmic}
P.~Hell.
\newblock Algorithmic aspects of graph homomorphisms.
\newblock {\em Surveys in Combinatorics}, 307:239--276, 2003.

\bibitem{hell1990complexity}
P.~Hell and J.~Ne\v{s}et\v{r}il.
\newblock On the complexity of h-coloring.
\newblock {\em Journal of Combinatorial Theory, Series B}, 48(1):92--110, 1990.

\bibitem{hell2008colouring}
P.~Hell and J.~Ne\v{s}et\v{r}il.
\newblock Colouring, constraint satisfaction, and complexity.
\newblock {\em Computer Science Review}, 2(3):143--163, 2008.

\bibitem{hobby1988structure}
D.C. Hobby, R.~McKenzie, and American~Mathematical Society.
\newblock {\em The structure of finite algebras}.
\newblock American Mathematical Society, 1988.

\bibitem{jeavons1997closure}
P.~Jeavons, D.~Cohen, and M.~Gyssens.
\newblock Closure properties of constraints.
\newblock {\em Journal of the ACM}, 44(4):527--548, 1997.

\bibitem{kaufman2008algebraic}
T.~Kaufman and M.~Sudan.
\newblock Algebraic property testing: the role of invariance.
\newblock In {\em Proc. 40th ACM symposium on Theory of computing}, pages
  403--412, 2008.

\bibitem{krokhin2005complexity}
A.~Krokhin, A.~Bulatov, and P.~Jeavons.
\newblock The complexity of constraint satisfaction: an algebraic approach.
\newblock {\em Structural theory of automata, semigroups, and universal
  algebra}, pages 181--213, 2005.

\bibitem{ron2009algorithmic}
D.~Ron.
\newblock Algorithmic and analysis techniques in property testing, 2009.

\bibitem{szemerdi1975regular}
E.~Szemer{\'e}di.
\newblock Regular partitions of graphs.
\newblock {\em Probl{\`e}mes combinatoires et th{\'e}orie des graphes},
  260:399--401, 1975.

\bibitem{yoshida2011optimal}
Y.~Yoshida.
\newblock Optimal constant-time approximation algorithms and (unconditional)
  inapproximability results for every bounded-degree {CSP}.
\newblock In {\em Proc. 43rd ACM symposium on Theory of computing}, pages
  665--674, 2011.

\bibitem{yoshida2011testing}
Y.~Yoshida and Y.~Kobayashi.
\newblock Testing $(s,t)$-disconnectivity of graphs and digraphs, 2011.
\newblock manuscript.

\end{thebibliography}

\end{document}